\def\BibTeX{{\rm B\kern-.05em{\sc i\kern-.025em b}\kern-.08em
    T\kern-.1667em\lower.7ex\hbox{E}\kern-.125emX}}
\newtheorem{definition}{Definition}
\newtheorem{corollary}{Corollary}
\newtheorem{lemma}{Lemma}
\newtheorem{remark}{Remark}
\begin{document}

\title{FOGNA: An effective Sum-Difference Co-Array Design Based  on  Fourth-Order Cumulants}
\author{Si Wang and Guoqiang Xiao
\thanks{
Manuscript received xx December 2024.
This work was supported in part by the National Natural Science Foundation of China under Grant 62371400.
The associate editor coordinating the review of this manuscript and approving it for publication was xxx.
 (Corresponding author: Guoqiang Xiao.)}
\thanks{Si Wang and Guoqiang Xiao are with the College of Computer and Information Science, Southwest University, Chongqing 400715, China
 (e-mail:hw8888@email.swu.edu.cn; gqxiao@swu.edu.cn).}
}

\markboth{ FOGNA: An effective Sum-Difference Co-Array Design Based  on  Fourth-Order Cumulants,~Vol.~xx, No.~x, December~2024}%
{WNAG \MakeLowercase{\textit{et al.}}: FOGNA: An effective Sum-Difference Co-Array Design Based  on  Fourth-Order Cumulants}

\maketitle

\begin{abstract}
Array structures based on the fourth-order difference co-array (FODCA) provide more degrees of freedom (DOF).
However, since the growth of DOF is limited by a single case of fourth-order cumulant in FODCA,
this paper aims to design a sparse linear array (SLA) with higher DOF via
exploring different cases of fourth-order cumulants.
We present a mathematical framework based on fourth-order cumulant to devise a fourth-order extend co-array (FOECA),
which is equivalent to FODCA.
Based on FOECA, a novel fourth-order generalized nested array (FOGNA) is proposed in the paper,
which can provide closed-form expressions for the sensor
locations and enhance DOF in order to resolve more signal sources in the estimation of direction of arrival (DOA) .
FOGNA is consisted of three subarrays,
where the first is a concatenated nested array and the other two subarrays are ULAs with big inter-spacing between sensors.
When the total physical sensors are given, the number of sensors in each subarray of FOGNA  is determined by
the designed algorithm, which can obtain the maximum DOF under the proposed array structure and derive closed-form expressions for the sensor locations of FOGNA.
The proposed FOGNA not only achieves higher DOF than those of existing FODCAs
but also reduces mutual coupling effects.
Numerical simulations are conducted to verify the superiority of FOGNA on DOA estimation performance and enhanced DOF over other existing FODCAs.
\end{abstract}


\begin{IEEEkeywords}
    Sparse linear array, sum-difference co-array, fourth-order cumulants, mutual coupling, direction of arrival estimation.
\end{IEEEkeywords}

\section{Introduction}
\IEEEPARstart{L}{ow} cost sampling in intelligent perception is widely applied in many fields such as frequency estimation in the time domain \cite{Xiao2017notes}, \cite{Xiao2018robustness}, \cite{Xiao2016symmetric}, \cite{Xiao2021wrapped}, \cite{Xiao2023on}
and DOA estimation in the spatial domain.
This paper mainly focuses on DOA estimation in array signal processing,
which is a fundamental problem studied for several decades \cite{Krim1996}, \cite{Godara1997}, \cite{Tuncer2009}, \cite{Xiao22023}.
It is well known that a uniform linear array (ULA) with $N$-sensors is used to estimate ($N-1$) sources using
DOA estimation methods such as MUSIC \cite{Schmidt1986} or ESPRIT \cite{Roy1989}.
To increase the DOF of ULA, more sensors are required, thus leading to a higher cost in practical applications.
ULAs also suffer from severe mutual coupling effects among physical sensors.
However, nonuniform linear arrays (also known as SLAs ) offer an effective solution to these problems.
For an $N$-sensors sparse array,
the corresponding difference co-array (DCA) can be constructed with the second-order cumulant (SOC) of the received signals,
which can provide $\mathcal{O}(N^2)$ consecutive lags \cite{Pal2010}, \cite{Pal2011}.
In this way, DOFs can be increased significantly compared to traditional ULAs,
while mutual coupling effects may also be reduced due to more larger inter-spacing between two sensors in SLAs \cite{BD2017mutual}.



%

Minimum redundancy array (MRA) \cite{Moffet1968} is a foundational structure in order to obtain as large DOF as possible
by reducing redundant sensors.
However, the non-closed form expressions for sensor positions hinder MRA's scalability and complicate large-scale array design.
This limitation leads to extensive research on nested arrays \cite{Pal2010} and coprime arrays \cite{Pal2011},
which offer significant advantages with their closed-form expressions for the sensor locations.
The success of nested arrays and coprime arrays inspires further developments aimed at enhancing DOF,
including augmented coprime arrays \cite{Pal22011},
enhanced nested arrays \cite{Zhao2019} and arrays based on the maximum element spacing criterion \cite{Zheng2019}.
With the respect of DOA estimation,
traditional subspace-based methods \cite{Liu2015} only utilize the consecutive lags of the DCA, making a hole-free configuration advantageous.
Consequently, hole-filling strategies have been proposed to create new coprime arrays-like with a hole-free DCA \cite{Wang2019}.
Obviously, various DCAs based on SOC have been widely studied in DOA estimation because of its significantly enhanced DOF \cite{Pal2010}, \cite{Pal2011}, \cite{Zhao2019}, \cite{Zheng2019}, \cite{Shi2022}.

Moreover, exploiting third-order cumulant (TOC) \cite{Sharma2023} or fourth-order cumulant (FOC) of SLAs
can further enhance DOF \cite{Xiao2023}, \cite{Guo2024}, starting from a mathematical model perspective.
Specifically, the corresponding third-order difference co-array (TODCA) and FODCA of SLA can be obtained by calculating the TOC and FOC of the signals received based on SLA separately.
At this time, an $N$-sensors sparse array provides $\mathcal{O}(N^3)$ and $\mathcal{O}(N^4)$ consecutive lags for TODCAs and FODCAs, respectively.

However, the TODCA requires an odd number of physical sensors,
which makes it more difficult to determine the physical sensor positions during array design.
As a result, the application of FODCA based on FOC in DOA estimation has recently attracted great attention \cite{Ahmed2017} \cite{Zhou2020}, \cite{LiuJ2017},
as it provides higher DOF to resolve more signal sources and obtains closed-form expressions of physical sensor positions more possibly.

In recent years, numerous representative studies have focused on the exploration of FODCA.
In \cite{Shen2016}, FODCA based on two-level nested arrays is studied.
The extension of four-level nested array is applyed to design FODCA in \cite{Piya2012},
which is an extension of the two-level nested array in \cite{Shen2016}.
Furthermore, the simplified and enhanced four-level nested array is proposed in \cite{Shen2019}, further increasing the DOF.
In addition, considering sum co-array (SCA), a new fourth-order sparse array called sum-difference-FODC is proposed in \cite{Guo2024},
which is composed of one sparse array and another extended-spacing sparse array.
By exploiting second-order SCA and DCA, more consecutive lags of FODCA are achieved.


Although SLAs designed based on FODCA can greatly increase the number of  consecutive lags,
using a single FOC case to design SLAs still limits the increase in the number of consecutive lags for SLAs.
Therefore, in order to increase the number of consecutive lags for virtual array obtained by using FOC to a greater extent,
a framework of fourth-order extend coarray (FOECA) is proposed in the paper with combining three cases of FOC for received signals.
Furthermore, based on FOECA, we propose a fourth-order generalized nested array (FOGNA) with  hole-free co-arrays and closed form expressions of sensor positions.
The proposed co-array is inspired by the ideas in \cite{Guo2024} and \cite{Shen2016},
where  SLAs in \cite{Guo2024} designed by using second-order DCA and SCA can greatly enhance DOF,
while the consecutive lags obtained by hole-free FODCA of extended nested array in \cite{Shen2016} can be increased significantly by adding a third ULA to the array structure.
Consequently, these approaches are considered in the specific case design of FOGNA to enhance the DOF.

In addition, among the commonly used subspace-based methods to estimate DOA, such as SS-MUSIC,
only the consecutive lags of DCA can by used \cite{Piya2012}.
Therefore, the number of resolvable sources in DOA estimation is greatly affected by the DOF\cite{Piya2012}, \cite{LiuCL2016},
which are commonly adopted as a indicator for quantitative evaluation and performance optimization of designing SLAs \cite{Shen2016}, \cite{Shen2019},\cite{Shen2015}.
Furthermore, to design a SLA structure with enhanced DOF by using different cases of FOC,
we introduce the criterion of forming more consecutive lags of designing SLA as follows,

\emph{Criterion 1 (Large consecutive lags of DCA)}:
The large consecutive lags of DCA are preferred, which can not only increase the number of resolvable sources
but also lead to higher spatial resolution in the DOA estimation \cite{Pal2010}, \cite{Piya2012}, \cite{Shen2019}.

\emph{Criterion 2 (Closed-form expressions of sensor positions)}:
A closed-form expression of sensor positions is preferred for scalability considerations \cite{Cohen2019}, \cite{Cohen2020}.

\emph{Criterion 3 (Hole-free DCA)}:
A SLA with a hole-free DCA is preferred,
since the data from its DCA can be utilized directly by subspace-based DOA estimation methods
which are easy to be implemented, and thus the algorithms based on compressive sensing \cite{Shen2015}, \cite{Shen20152}, \cite{Shen2017} or co-array interpolation techniques \cite{Cui2019}, \cite{Zhou2018} with increased computational complexity can be avoided \cite{Cohen2020}, \cite{Liu20172}.

\emph{Contribution:}
This paper focuses on the design of SLA in order to get hole-free FOECA based on FOC with enhance DOF.
The main contributions of the paper are threefold.\par
$\bullet$ In this paper, an effective framework of FOECA is devised mathematically based on three cases of FOC for non-Gaussian sources,
which can provide higher DOF than those of other SLAs  designed based on FODCA.
\par
$\bullet$ A novel FOGNA is systematically designed based on FOECA, utilizing three linear subarrays placed side-by-side,
which maximizes the consecutive lags of FOECA to enhance resolving the number of sources.
For the proposed FOGNA with the given number of physical sensors,
the closed-form expressions of the physical sensor positions have been derived analytically in this paper,
and the DOF of FOGNA is further enhanced by improving the configuration of the physical sensors among the three subarrays.
Consequently, the proposed FOGNA offers significantly higher DOF than those of other existing similar SLAs \cite{Guo2024}, \cite{Piya2012}, \cite{Shen2019}, \cite{Yang2023}.
\par
$\bullet$ Numerical simulations show that FOGNA outperforms other existing FODCAs in terms of resolution for DOA estimation.
Specifically, when the difference of incidence angles for two sources is reduced to $1.6^{\circ}$,
it can be observed that only FOGNA is able to resolve the two sources, while the other four arrays cannot.
Additionally, FOGNA performs better in DOA estimation under the conditions of with or without mutual coupling effects compared to other existing FODCAs,
as confirmed by numerical simulations.

This paper is organized as follows. In Section II, we briefly introduce the general sparse array model and the mutual coupling model.
In Section III, a mathematical framework is presented to derive a FOECA associated with three cases of FOC,
with further consideration of mutual coupling for the FOECA.
In Section IV, we describe how to design the proposed FOGNA based on FOECA and provide closed-form mathematical expressions for the physical sensors in FOGNA by using SCA and DCA.
Furthermore, we explain that the FOECA of proposed FOGNA is hole-free and calculate the corresponding maximum DOF.
In Section V, to illustrate the superior performance of FOGNA in enhancing DOF and reducing coupling effects,
we compare the DOF and coupling leakage of FOGNA with those of the other four FODCAs.
Further, several numerical simulations are presented to evaluate the RMSE of FOGNA
compared to the other FODCAs with respect to SNR, snapshots and the number of sources, both with and without considering coupling effects.

\textit{Notations:}
$\mathbb{S}$ is the physical sensor positions set of a SLA.
$N$ is the number of sensors. $D$ is the number of source signals to be estimated. $K$ is the number of snapshots.
$\Phi$ is sensor positions set of a FOECA.
$\mathbb{R}$, $\mathbb{C}$ and $\mathbb{Z}$  are the real number field, the complex number field and the integer number field, respectively.
The operators $\otimes$, $\odot$, $(\cdot)^T$, $(\cdot)^H$ and $(\cdot)^*$ stand for the Kronecker products,
Khatri-Rao products,
transpose, conjugate transpose and complex conjugation, respectively.
Set $\{a : b : c \}$ denotes the integer line from $a$ to $c$ sampled in steps of $b\in \mathbb{N}^+$.
When $b=1$, we use shorthand $\{a : c \}$ .

\section{Preliminaries}

\subsection{General Sparse Array Model}
Assume that there are $D$ non-Gaussian and mutually uncorrelated far-field narrow band sources.
The incident angle of the $i^{th}$ source is $\theta_i$,
and the physical sensor position set of the SLA is represented as $\mathbb{S}=\{ p_1,p_2,...,p_N \}\cdot d$,
where the unit spacing $d$ is generally set to half wavelength.
The output of the $n^{th}$ physical sensor at the $t^{th}$ snapshot,
denoted as $x_n(t)$, can be expressed as follows
\begin{equation}
\label{w1}
\begin{aligned}
x_n(t)=\sum_{i=1}^Da_n(\theta_i)s_i(t)+n_n(t),
\end{aligned}
\end{equation}
where $n_n(t)$ denotes a zero-mean additive Gaussian noise sample at the $n^{th}$ physical sensor, which is assumed to be
statistically independent of all the sources.
And $a_n(\theta_i)$ denotes the steering response of $n^{th}$ physical sensor corresponding to the $i^{th}$ source,
which can be expressed as follows
\begin{equation}
\label{w8}
a_n(\theta_i)=e^{j\frac{2\pi p_{l_n} d}{\lambda}\sin(\theta_i)}.
\end{equation}

Thus, the received signals for all the $N$ physical sensors are denoted as
$\boldsymbol{x}(t)=[x_1(t),...,x_N(t)]^T$, which can be written as follows
\begin{equation}
\label{w2}
\boldsymbol{x}(t)=\sum_{i=1}^D\boldsymbol{a}(\theta_i)s_i(t)+\boldsymbol{n}(t)=\boldsymbol{A}(\theta)\boldsymbol{s}(t)+\boldsymbol{n}(t),
\end{equation}
where $\boldsymbol{s}(t)=[s_1(t),...,s_D(t)]^T$ denotes the source signal vector,
$\boldsymbol{n}(t)=[n_1(t),...,n_N(t)]^T$ denotes the additive Gaussian noise vector,
and $\boldsymbol{a}(\theta_i)=[a_1(\theta_i),...,a_N(\theta_D)]^T$ denotes the array steering vector corresponding to the $i^{th}$ source,
and $\boldsymbol{A}(\theta)=[\boldsymbol{a}(\theta_1),...,\boldsymbol{a}(\theta_D)]\in\mathbb{C}^{N\times D}$
denotes the array manifold matrix. Set $p_1 = 0$, $\boldsymbol{a}(\theta_i)$ can be written as
\begin{equation}\nonumber
\boldsymbol{a}(\theta_i)=[1,e^{-j\frac{2\pi p_2d\sin\theta_i}{\lambda}},...,e^{-j\frac{2\pi p_Nd\sin\theta_i}{\lambda}}]^T.
\end{equation}

For $K$ numbers of snapshots, (\ref{w2}) can be rewritten in matrix form as follows
\begin{equation}
\boldsymbol{X}=\boldsymbol{A}\boldsymbol{S}+\boldsymbol{N},
\end{equation}
where $\boldsymbol{X}=[\boldsymbol{x}(1),..., \boldsymbol{x}(D)] \in\mathbb{C}^{N\times D}$ is the received signal matrix,
$\boldsymbol{S}=[\boldsymbol{s}(1),..., \boldsymbol{s}(D)] \in\mathbb{C}^{D\times K}$ is the source signal matrix,
and $\boldsymbol{N}=[\boldsymbol{n}(1),..., \boldsymbol{n}(D)] \in\mathbb{C}^{N\times K}$ is the additive Gaussian noise matrix.

Before describing the proposed array structure, we firstly introduce the definitions  of SCA and DCA  for the completeness of this paper.

\begin{definition}
 (SCA\cite{Robin2017}): Given a physical sensor position set, $\mathbb{S }=\{p_{l_1}, p_{l_2},..., p_{l_N}\}\cdot d$, of a  SLA,
 the following set determines the sensor positions of a SCA
  \begin{equation}
 \Phi_{\sum}=\{ (p_{l_1}+p_{l_2}) d \ | \ l_1,l_2=1,...,N\},
 \label{wang2}
 \end{equation}
 where $p_{l_1}$ and $p_{l_2}$ represent the physical sensor positions given by the set $\mathbb{S}$.
 \end{definition}

\begin{definition}

(DCA\cite{Dias2017}): Given a physical sensor position set, $\mathbb{S} =\{p_{l_1}, p_{l_2},..., p_{l_N}\}\cdot d$, of a  SLA,
the following set determines the sensor positions of a DCA
\begin{equation}
\Psi=\{ (p_{l_1}-p_{l_2}) d \ | \ l_1, l_2=1,2,...,N \}.
\end{equation}
where $p_{l_1}$ and $p_{l_2}$ represent the physical sensor positions given by the set $\mathbb{S}$.
\end{definition}

In addition, to lighten the notations, given any two sets $\mathbb{S}$ and $\mathbb{S}'$ \cite{Xiao2023}, we use
\begin{equation}
\mathbb{C}(\mathbb{S},\mathbb{S}')=\{ p_i+p_j\ | \ p_i\in \mathbb{S},p_j\in \mathbb{S}' \},
\end{equation}
to denote the cross sum of elements from $\mathbb{S}$ and $\mathbb{S}'$.

\subsection{Mutual Coupling Model}
There exists the mutual coupling effect among the physical sensors in practical applications.
When considering the effect of mutual coupling,
the received signal vector in (\ref{w2}) can be rewritten as
\begin{equation}
\label{wang5}
\boldsymbol{x}(t)=\boldsymbol{C}\boldsymbol{A}(\theta)\boldsymbol{s}(t)+\boldsymbol{n}(t),
\end{equation}
where $\boldsymbol{C}$ is the $N\times N$ mutual coupling matrix. Note that the coupling-free model in (\ref{w2}) can be regarded as a special case of (\ref{wang5}),
where $\boldsymbol{C}$ is an identity matrix.

In general, the expression for $\boldsymbol{C}$ is rather complicated\cite{LiuCL2016,LiuJ2017}. In the ULA configuration, $\boldsymbol{C}$ can be approximated by a
B-banded symmetric Toeplitz matrix as follows\cite{Friedlander1991}, \cite{Svantesson19991}, \cite{Svantesson19992}, \cite{YeZ2009}, \cite{Liao2012},
\begin{equation}
\boldsymbol{C}(n_1,n_2)=
\begin{cases}
c_{|n_1-n_2|},\ \ \ \ if\ |n_1-n_2|\leq B,\\
0,\ \ \ \ \ \ \ \ otherwise,
\end{cases}
\end{equation}
where $n_1,n_2\in\mathbb{S}$, and $c_0,c_1,...,c_B$ are coupling coefficients satisfying $c_0=1>|c_1|>|c_2|>...>|c_B|$ \cite{Friedlander1991}.

\begin{definition}
(Coupling Leakage \cite{Zheng2019}): For a given array with N-senors, the coupling leakage is defined as the energy ratio:
\begin{equation}
\label{w24}
L=\frac{\|\boldsymbol{C}-diag(\boldsymbol{C}) \|_F}{\|\boldsymbol{C} \|_F},
\end{equation}
where $\|\boldsymbol{C}-diag(\boldsymbol{C}) \|_F$ is the energy of all the off-diagonal components, which characterizes the level of mutual coupling.
A small value of $L$ implies that the mutual coupling is less significant.
\end{definition}

\section{FOURTH-ORDER EXTENDED CO-ARRAY}
Higher-order cumulants belong to higher-order statistics \cite{Yang2023}, which are used to describe the mathematical characteristics of stochastic processes.
When the received signals is not Gaussian, relying solely on second-order statistics cannot fully capture the statistical characteristics of the signals.
Therefore, we use higher-order cumulants, specifically FOC, to perform DOA estimation. For the zero-mean random process $\boldsymbol{x}(t)$,
the calculation expression for its FOC is as follows
\begin{equation}
\begin{aligned}
\mathcal{C}_{4,\boldsymbol{x}}(l_1,l_2,l_3,l_4)&=cum\{ \boldsymbol{x}_{l_1}(t),\boldsymbol{x}_{l_2}(t),\boldsymbol{x}_{l_3}(t),\boldsymbol{x}_{l_4}(t) \}\\
&=E\{\boldsymbol{x}_{l_1}(t)\boldsymbol{x}_{l_2}(t)\boldsymbol{x}_{l_3}(t)\boldsymbol{x}_{l_4}(t) \}\\
&-E\{\boldsymbol{x}_{l_1}(t)\boldsymbol{x}_{l_3}(t)\}E\{\boldsymbol{x}_{l_2}(t)\boldsymbol{x}_{l_4}(t) \}\\
&-E\{\boldsymbol{x}_{l_1}(t)\boldsymbol{x}_{l_4}(t)\}E\{\boldsymbol{x}_{l_2}(t)\boldsymbol{x}_{l_3}(t) \}\\
&-E\{\boldsymbol{x}_{l_1}(t)\boldsymbol{x}_{l_2}(t)\}E\{\boldsymbol{x}_{l_3}(t)\boldsymbol{x}_{l_4}(t) \},
\end{aligned}
\end{equation}
where $E\{\cdot\}$ represents the expectation operator. If the random process $\boldsymbol{x}(t)$ is a zero-mean Gaussian process, its FOC is equal to zero.

\subsection{Fourth-Order Extended Co-Array}
The FOC of received signal vector $\boldsymbol{x}(t)$ in (\ref{w2}) can be represented as the following three cases

\begin{subequations}
\label{eq1}
\begin{equation}
\begin{aligned}
\label{eq1a}
\mathcal{C}_{\boldsymbol{x}}^{(1)}&=cum\{\boldsymbol{x}(t),\boldsymbol{x}(t),\boldsymbol{x}^*(t),\boldsymbol{x}(t)\}\in\mathbb{C}^{N\times N\times N\times N}\\
&=\sum_{i=1}^DC_{4,s_i(t)}^{(1)}[\boldsymbol{a}(\theta_i)\otimes\boldsymbol{a}(\theta_i)]\times[\boldsymbol{a}^*(\theta_i)\otimes\boldsymbol{a}(\theta_i)]^H,\\
C_{4,s_i(t)}^{(1)}&=cum\{s_i(t),s_i(t),s_i^*(t),s_i(t)\},
\end{aligned}
\end{equation}

\begin{equation}
\begin{aligned}
\label{eq1b}
\mathcal{C}_{\boldsymbol{x}}^{(2)}&=cum\{\boldsymbol{x}(t),\boldsymbol{x}^*(t),\boldsymbol{x}^*(t),\boldsymbol{x}(t)\}\in\mathbb{C}^{N\times N\times N\times N}\\
&=\sum_{i=1}^DC_{4,s_i(t)}^{(2)}[\boldsymbol{a}(\theta_i)\otimes\boldsymbol{a}^*(\theta_i)]\times[\boldsymbol{a}(\theta_i)^*\otimes\boldsymbol{a}(\theta_i)]^H,\\
C_{4,s_i(t)}^{(2)}&=cum\{s_i(t),s_i^*(t),s_i^*(t),s_i(t)\},
\end{aligned}
\end{equation}

\begin{equation}
\begin{aligned}
\label{eq1c}
\mathcal{C}_{\boldsymbol{x}}^{(3)}&\triangleq cum\{\boldsymbol{x}^*(t),\boldsymbol{x}^*(t),\boldsymbol{x}(t),\boldsymbol{x}^*(t)\}\in\mathbb{C}^{N\times N\times N\times N}\\
&=\sum_{i=1}^DC_{4,s_i(t)}^{(3)}[\boldsymbol{a}^*(\theta_i)\otimes\boldsymbol{a}^*(\theta_i)]\times[\boldsymbol{a}(\theta_i)\otimes\boldsymbol{a}^*(\theta_i)]^H,\\
C_{4,s_i(t)}^{(3)}&=cum\{s_i^*(t),s_i^*(t),s_i(t),s^*_i(t)\},
\end{aligned}
\end{equation}
\end{subequations}
\textit{\textbf{Note:} For the cases of FOC,
there are a total of six cases, and the above three are selected to derive the FOECA.}

The FOCs in (\ref{eq1}) eliminate the corresponding noise $\boldsymbol{n}(t)$,
as the cumulants of order greater than 2 are identically zero for Gaussian random processes.
The vectorization of FOC $\mathcal{C}_{\boldsymbol{x}}^{(j)}$ generates the corresponding column vector $\boldsymbol{c}_{\boldsymbol{x}}^{(j)}\in\mathbb{C}^{N^4\times1}, j\in\{1,2,3\}$,
as follows

\begin{equation}
\label{wh1}
\begin{aligned}
\boldsymbol{c}_{\boldsymbol{x}}^{(j)}=vec(\mathcal{C}_{\boldsymbol{x}}^{(j)})=\sum_{i=1}^Db^{(j)}(\theta_i)p_{s_i}^{(j)}=\boldsymbol{B}^{(j)}\boldsymbol{p}_{\boldsymbol{s}}^{(j)},
\end{aligned}
\end{equation}
where    $\boldsymbol{b}^{(j)}(\theta_i)\in \mathbb{C}^{N^4\times1}$, $\boldsymbol{B}^{(j)}\in\mathbb{C}^{N^4\times D}$,
$p_{s_i}^{(j)}\in\mathbb{C}$, $\boldsymbol{p}_{\boldsymbol{s}}^{(j)}\in \mathbb{C}^{D\times 1}$ .

Further, a new FOECA can be derived as follows by combining three $\boldsymbol{c}_{\boldsymbol{x}}^{(j)},j\in\{1,2,3\}$,
whose DOF are significantly enhanced compared to FODCA

\begin{equation}
\label{w18}
\boldsymbol{c}_{\boldsymbol{x}}=[{\boldsymbol{c}_{\boldsymbol{x}}^{(1)}}^T,{\boldsymbol{c}_{\boldsymbol{x}}^{(2)}}^T,{\boldsymbol{c}_{\boldsymbol{x}}^{(3)}}^T]^T
\triangleq\boldsymbol{B}\boldsymbol{p}_s\in\mathbb{C}^{3N^4\times1},
\end{equation}
where the equivalent source signal vector $\boldsymbol{p}_s$ is expressed as follows

\begin{equation}
\label{w22}
\boldsymbol{p}_s\triangleq[{\boldsymbol{p}_s^{(1)}}^T,{\boldsymbol{p}_s^{(2)}}^T,{\boldsymbol{p}_s^{(3)}}^T]^T\in\mathbb{C}^{3D\times1},
\end{equation}
and the equivalent array manifold matrix $\boldsymbol{B}$ is
\begin{equation}
\label{w21}
\boldsymbol{B}= \left(
               \begin{matrix}
                 \boldsymbol{B}^{(1)} & 0                    & 0                    \\
                 0                    & \boldsymbol{B}^{(2)} &0                     \\
                 0                    & 0                    &\boldsymbol{B}^{(3)}   \\
               \end{matrix}
             \right),
\end{equation}
where the specific expression of $\boldsymbol{B}^{(j)},\{j=1,2,3\}$ in (\ref{w21}) is as follows
\begin{equation}
\label{w5}
\begin{aligned}
&\boldsymbol{B}^{(j)}\triangleq[\boldsymbol{b}^{(j)}(\theta_1),\boldsymbol{b}^{(j)}(\theta_2),...,\boldsymbol{b}^{(j)}(\theta_D)],\\
&\boldsymbol{b}^{(j)}(\theta_i)\triangleq[{b}^{(j)}_1(\theta_i),{b}^{(j)}_2(\theta_i),...,{b}^{(j)}_{N^4}(\theta_i)]^T,(i=1,2,...,D)\\
&\ \ \ \ \ \ =\begin{cases}
[\boldsymbol{a}^*(\theta_i)\otimes\boldsymbol{a}(\theta_i)]^*\otimes[\boldsymbol{a}(\theta_i)\otimes\boldsymbol{a}(\theta_i)], j=1,\\
[\boldsymbol{a}(\theta_i)^*\otimes\boldsymbol{a}(\theta_i)]^*\otimes[\boldsymbol{a}(\theta_i)\otimes\boldsymbol{a}^*(\theta_i)],j=2,\\
[\boldsymbol{a}(\theta_i)\otimes\boldsymbol{a}(\theta_i)]^*\otimes[\boldsymbol{a}^*(\theta_i)\otimes\boldsymbol{a}(\theta_i)],j=3.
\end{cases}
\end{aligned}
\end{equation}

The FOCs for the three cases of the source are as follows
\begin{equation}
\label{w4}
\begin{aligned}
&\boldsymbol{p}_{\boldsymbol{s}}^{(j)}\triangleq[C_{4,s_1(t)}^{(j)},C_{4,s_2(t)}^{(j)},...,C_{4,s_D(t)}^{(j)}]^T,\\
&\begin{matrix}
C_{4,s_i(t)}^{(j)}=\\
(i=1,2,...,D)       \\
\end{matrix}
\begin{cases}
cum[s_i(t),s_i(t),s_i(t),s_i^*(t)],j=1,\\
cum[s_i(t),s_i^*(t),s_i^*(t),s_i(t)],j=2,\\
cum[s_i^*(t),s_i^*(t),s_i^*(t),s_i(t)],j=3.
\end{cases}
\end{aligned}
\end{equation}

For $\boldsymbol{c}_{\boldsymbol{x}}^{(j)}=\boldsymbol{B}^{(j)}\boldsymbol{p}_{\boldsymbol{s}}^{(j)}, j\in\{1,2,3\}$ given in (\ref{wh1}),
it can be observed that $\boldsymbol{c}_{\boldsymbol{x}}^{(j)}$ is the result of vectorizing the FOC $\mathcal{C}_{\boldsymbol{x}}^{(j)}$,
where $\boldsymbol{p}_{\boldsymbol{s}}^{(j)}$ represents the equivalent source signals vector given in (\ref{w4}),
$\boldsymbol{B}^{(j)}$ represents the equivalent manifold matrix and $\boldsymbol{b}^{(j)}(\theta_i)$ represents
the equivalent steering vector given in (\ref{w5}) corresponding to the source signal.

Consequently, three virtual co-arrays under the three different cases can be obtained from
$\boldsymbol{c}_{\boldsymbol{x}}^{(j)}=\boldsymbol{B}^{(j)}\boldsymbol{p}_{\boldsymbol{s}}^{(j)}, j\in\{1,2,3\}$,
namely first fourth-order co-array (FOCA$_1$), second fourth-order co-array (FOCA$_2$) and third fourth-order co-array (FOCA$_3$),
which are defined as follows.

\textbf{\textit{case 1:}} When $j=1$, we can get $\boldsymbol{b}^{(1)}(\theta_i)=\boldsymbol{a}(\theta_i)\otimes\boldsymbol{a}(\theta_i)\otimes\boldsymbol{a}(\theta_i)\otimes\boldsymbol{a}^*(\theta_i)$
from (\ref{w5}), and the elements of $\boldsymbol{b}^{(1)}(\theta_i)$ are given as follows
\begin{equation}
 \label{w11}
 \begin{aligned}
 &b^{(1)}_{N^3(l_1-1)+N^2(l_2-1)+N(l_3-1)+l_4}(\theta_i)\\
 &=a_{l_1}(\theta_i)a_{l_2}(\theta_i)a_{l_3}(\theta_i)a_{l_4}^*(\theta_i)\\
 &=e^{j\frac{2\pi d}{\lambda}(p_{l_1}+p_{l_2}+p_{l_3}-p_{l_4})\sin(\theta_i)}.
 \end{aligned}
 \end{equation}

Compared with the steering response $a_n(\theta_i)=e^{j\frac{2\pi p_{l_n} d}{\lambda}\sin(\theta_i)}$ for ULAs,
(\ref{w11}) implys the steering response of  sensor located at $(p_{l_1}+p_{l_2}+p_{l_3}-p_{l_4}) d$ for FOCA$_1$.
Consequently, the FOCA$_1$ derived from the vector $\boldsymbol{c}_{\boldsymbol{x}}^{(1)}$ in (\ref{wh1})
can be considered as the equivalent virtual array to receive signals, which is defined as follows.

\begin{definition}
(FOCA$_1$): For a linear array with N-sensors located at positions given by the set $\mathbb{S}$, a multiset $\Phi_1$ is defined as follows
\begin{equation}
\label{w28}
\Phi_1\triangleq\{(p_{l_1}+p_{l_2}+p_{l_3}-p_{l_4})d\ | \ l_1,l_2,l_3,l_4=1,2,...,N\},
\end{equation}
where the multiset $\Phi_1$ allows repetitions, and has an underlying set $\Phi_1^{u}$ that contains the unique elements of $\Phi_1$.
Consequently, FOCA$_1$ is defined as the virtual linear array for case 1,
where the sensors are located at positions given by the set $\Phi_1^u$.
\end{definition}

\textbf{\textit{case 2:}} When $j=2$, $\boldsymbol{b}^{(2)}(\theta_i)=\boldsymbol{a}(\theta_i)\otimes\boldsymbol{a}^*(\theta_i)\otimes\boldsymbol{a}(\theta_i)\otimes\boldsymbol{a}^*(\theta_i)$
in (\ref{w5}) is obtained, and the elements of $\boldsymbol{b}^{(2)}(\theta_i)$ are given as follows
\begin{equation}
 \label{wh2}
 \begin{aligned}
 &b^{(2)}_{N^3(l_1-1)+N^2(l_2-1)+N(l_3-1)+l_4}(\theta_i)\\
 &=a_{l_1}(\theta_i)a_{l_2}^*(\theta_i)a_{l_3}(\theta_i)a_{l_4}^*(\theta_i)\\
 &=e^{j\frac{2\pi d}{\lambda}(p_{l_1}-p_{l_2}+p_{l_3}-p_{l_4})\sin(\theta_i)}.
 \end{aligned}
 \end{equation}

Compared with the steering response $a_n(\theta_i)=e^{j\frac{2\pi p_{l_n} d}{\lambda}\sin(\theta_i)}$ for ULAs,
(\ref{wh2}) implys the steering response of  sensor located at $(p_{l_1}-p_{l_2}+p_{l_3}-p_{l_4}) d$ for FOCA$_2$.
Consequently, the FOCA$_2$ derived from the vector $\boldsymbol{c}_{\boldsymbol{x}}^{(2)}$ in (\ref{wh1})
can be considered as the equivalent virtual array to receive signals, which is defined as follows.
\begin{definition}
(FOCA$_2$): For a linear array with N-sensors located at positions given by the set $\mathbb{S}$, a multiset $\Phi_2$ is defined as follows
\begin{equation}
\label{w27}
\Phi_2\triangleq\{(p_{l_1}-p_{l_2}+p_{l_3}-p_{l_4})d \ | \ l_1,l_2,l_3,l_4=1,2,...,N\},
\end{equation}
where the multiset $\Phi_2$ allows repetitions, and has an underlying set $\Phi_2^{u}$ that contains the unique elements of $\Phi_2$.
Consequently, FOCA$_2$ is defined as the virtual linear array for case 2,
where the sensors are located at positions given by the set $\Phi_2^u$.
\end{definition}

\textbf{\textit{case 3:}} When $j=3$, $\boldsymbol{b}^{(3)}(\theta_i)=\boldsymbol{a}^*(\theta_i)\otimes\boldsymbol{a}^*(\theta_i)\otimes\boldsymbol{a}^*(\theta_i)\otimes\boldsymbol{a}(\theta_i)$
in (\ref{w5}) is obtained, and the elements of $\boldsymbol{b}^{(3)}(\theta_i)$ are given as follows
\begin{equation}
 \label{wh3}
 \begin{aligned}
 &b^{(3)}_{N^3(l_1-1)+N^2(l_2-1)+N(l_3-1)+l_4}(\theta_i)\\
 &=a_{l_1}^*(\theta_i)a_{l_2}^*(\theta_i)a_{l_3}^*(\theta_i)a_{l_4}(\theta_i)\\
 &=e^{j\frac{2\pi d}{\lambda}(-p_{l_1}-p_{l_2}-p_{l_3}+p_{l_4})\sin(\theta_i)}.
 \end{aligned}
 \end{equation}

Compared with the steering response $a_n(\theta_i)=e^{j\frac{2\pi p_{l_n} d}{\lambda}\sin(\theta_i)}$ for ULAs,
(\ref{wh3}) implys the steering response of  sensor located at $(-p_{l_1}-p_{l_2}-p_{l_3}+p_{l_4}) d$ for FOCA$_3$.
Consequently, the FOCA$_3$ derived from the vector $\boldsymbol{c}_{\boldsymbol{x}}^{(3)}$ in (\ref{wh1})
can be considered as the equivalent virtual array to receive signals, which is defined as follows.

\begin{definition}
(FOCA$_3$): For a linear array with N-sensors located at positions given by the set $\mathbb{S}$, a multiset $\Phi_3$ is defined as follows
\begin{equation}
\label{w30}
\Phi_3\triangleq\{(-p_{l_1}-p_{l_2}-p_{l_3}+p_{l_4})d \ | \ l_1,l_2,l_3,l_4=1,2,...,N\},
\end{equation}
where the multiset $\Phi_3$ allows repetitions, and has an underlying set $\Phi_3^{u}$ that contains the unique elements of $\Phi_3$.
Consequently, FOCA$_3$ is defined as the virtual linear array for case 3,
where the sensors are located at positions given by the set $\Phi_3^u$.
\end{definition}

Furthermore, $\boldsymbol{c}_{\boldsymbol{x}}\in\mathbb{C}^{3N^4\times1}$ in (\ref{w18}) is obtained by
combing three $\boldsymbol{c}_{\boldsymbol{x}}^{(j)}, j\in\{1,2,3\}$.
It is equivalent to the cumulants of the signals received in a single snapshot by the constructed virtual linear array,
which is the combination of all three possible co-arrays for $j\in \{1, 2, 3\}$, namely FOCA$_1$, FOCA$_2$ and FOCA$_3$.
Therefore, the derived virtual linear array is the FOECA,
and to obtain the sensor position set of FOECA, we first introduce the following knowledge about multiset.

A multiset $\Phi$ is defined as the multiset-sum (bag sum) of the three multisets $\Phi_1$, $\Phi_2$ and $\Phi_3$ corresponding to the aforementioned three co-arrays, i.e.
\begin{equation}
\label{w19}
\Phi= \Phi_1 + \Phi_2 + \Phi_3.
\end{equation}

For the multiset-sum of two multisets $\Theta_1=(Q_1,m_1)$ and $\Theta_2=(Q_2,m_2)$,
where $Q_1$ and $Q_2$ are the base set of the multiset $\Theta_1$ and $\Theta_2$ respectively (i.e., the set of all possible elements),
and $m_1:Q_1\rightarrow\mathbb{N}$ and $m_2:Q_2\rightarrow\mathbb{N}$
represent functions that describe the relationships of the occurrence counts of all elements in $Q_1$ and $Q_2$ relative to $N$, respectively.
Thus, multiset-sum is denoted as $\Theta_1 + \Theta_2$, and its multiplicity function $m_+(\alpha)$ is defined as follows \cite{Hickman1980}
\begin{equation}
\begin{aligned}
&\Theta_1+\Theta_2=(Q_1\cup Q_2,m_+(\alpha)),\ \ \forall \alpha\in Q_1\cup Q_2,\\
&m_+(\alpha)=m_1(\alpha)+ m_2(\alpha),\ \ \forall \alpha\in Q_1\cup Q_2.
\end{aligned}
\end{equation}

For each element in the base set, its multiplicity in the multiset-sum is the sum of the multiplicities in the two multisets.

The multiset $\Phi$ in (\ref{w19}) allows repetitions, and has an underlying set $\Phi^u$ that contains the unique elements of $\Phi$, i.e.
\begin{equation}
\label{w31}
\Phi^u=\Phi_1^u \cup \Phi_2^u \cup \Phi_3^u,
\end{equation}
where  $\cup$ denotes the union operation of the set without duplicate elements,
$\Theta_1\cup \Theta_2\triangleq\{x | x\in \Theta_1\ or\ x\in \Theta_2\}$,
$|\Theta_1\cup \Theta_2|\triangleq|\Theta_1|+|\Theta_2|-|\Theta_1\cap \Theta_2|$ \cite{Haan2007}.\\
~

After obtaining the sensor position set of FOECA, the FOECA is defined as follows.
\begin{definition}
(FOECA):
For a linear array of N-sensors located at positions given by the set $\mathbb{S}$,
the FOECA is derived based on the FOCs, whose sensors are located at the set $\Phi^u$.
\end{definition}

Therefore, the FOECA is defined as the virtual linear array corresponding to $|\Phi^u|=\mathcal{O}(3N^4)$ sensors.
In addition, it worths to notes  the following two remarks in order to fully understand FOECA.


\begin{remark}
$\Phi_2$ in (\ref{w27}) is symmetric corresponding to zero.
$\Phi_1$ in (\ref{w28}) and $\Phi_3$ in (\ref{w30}) are opposite numbers to each other.
Consequently, the sensor position set $\Phi^u$ of FOECA in (\ref{w31}) is the union of the three sets $\Phi_1^u$, $\Phi_2^u$ and $\Phi_3^u$,
which is symmetric corresponding to zero.
It means that if a virtual sensor locate at $p \in \Phi^u\cdot d$, there must exist another corresponding virtual sensor located at $-p \in \Phi^u\cdot d$.
\end{remark}

\begin{remark}
In general, the FOECA of an arbitrary linear array might not be a hole-free array.
For example, the FOECA of a linear array with sensor positions given by $\mathbb{S} = \{0,1,5,8\}\cdot d$ can be obtained based on FOCs,
and the virtual sensor positions on the non-negative side are given by
$\{0,1,2,3,4,5,6,7,8,9,10,11,12,13,14,15,16,17,\text{X},19,\text{X},$ \\ $21,\text{X},23,24\}\cdot d$ without 18d, 20d, 22d.
\end{remark}

\subsection{The FOECA With Mutual Coupling}
%

The FOECA with mutual coupling can be derived based on the mutual coupling model in (\ref{wang5}) as follows
\begin{equation}
\label{eq2}
\begin{aligned}
\boldsymbol{\tilde{z}}_{\boldsymbol{x}}^{(1)}=\boldsymbol{C}^{(1)}_{vec}\boldsymbol{B}^{(1)}\boldsymbol{p}_{\boldsymbol{s}}^{(1)},\\
\boldsymbol{\tilde{z}}_{\boldsymbol{x}}^{(2)}=\boldsymbol{C}^{(2)}_{vec}\boldsymbol{B}^{(2)}\boldsymbol{p}_{\boldsymbol{s}}^{(2)},\\
\boldsymbol{\tilde{z}}_{\boldsymbol{x}}^{(3)}=\boldsymbol{C}^{(3)}_{vec}\boldsymbol{B}^{(3)}\boldsymbol{p}_{\boldsymbol{s}}^{(3)}.
\end{aligned}
\end{equation}

Further, combining the three $\boldsymbol{\tilde{z}}_{\boldsymbol{x}}^{(j)},j\in\{1,2,3\}$ to
derive FOECA with mutual coupling
\begin{equation}
\begin{aligned}
\label{w20}
\boldsymbol{\tilde{z}}_{\boldsymbol{x}}&=[{\boldsymbol{\tilde{z}}_{\boldsymbol{x}}^{(1)}}^T,{\boldsymbol{\tilde{z}}_{\boldsymbol{x}}^{(2)}}^T,
{\boldsymbol{\tilde{z}}_{\boldsymbol{x}}^{(3)}}^T]^T
\triangleq\boldsymbol{\tilde{C}}_{vec}\boldsymbol{B}\boldsymbol{p}_s\in\mathbb{C}^{3N^4\times1},
\end{aligned}
\end{equation}
where $\boldsymbol{B}$ and $\boldsymbol{p}_s$ are shown in (\ref{w21}) and (\ref{w22}), respectively.
And the virtual mutual coupling matrix $\boldsymbol{\tilde{C}}_{vec}\in\mathbb{C}^{3N^4\times3N^4}$ of FOECA is shown as follows
\begin{equation}
\begin{cases}
\boldsymbol{C}^{(1)}_{vec}=(\boldsymbol{C}^*\otimes\boldsymbol{C}^*)^*\otimes(\boldsymbol{C}\otimes\boldsymbol{C}^*),\\
\boldsymbol{C}^{(2)}_{vec}=(\boldsymbol{C}^*\otimes\boldsymbol{C})^*\otimes(\boldsymbol{C}\otimes\boldsymbol{C}^*),\\
\boldsymbol{C}^{(3)}_{vec}=(\boldsymbol{C}\otimes\boldsymbol{C})^*\otimes(\boldsymbol{C}^*\otimes\boldsymbol{C}),
\end{cases}
\end{equation}
\begin{equation}
\boldsymbol{\tilde{C}}_{vec}= \left(
               \begin{matrix}
                 \boldsymbol{C}^{(1)}_{vec} & 0                          & 0                         \\
                 0                          & \boldsymbol{C}^{(2)}_{vec} &0                          \\
                 0                          & 0                          &\boldsymbol{C}^{(3)}_{vec} \\
               \end{matrix}
             \right).
\end{equation}

\section{THE STRUCTURE OF FOURTH-ORDER GENERALIZED NESTED ARRAY}
When the FOECA is hole-free, it can be easily utilized to estimate DOA without any spatial aliasing \cite{Pal2010}, \cite{Pal22011}, \cite{Piya2012}.
Moreover, the number of consecutive lags of DCA mainly
depends on the physical sensor geometry of a linear array \cite{Cohen2020}.
Therefore, the FOGNA is proposed based on FOECA in this paper to enhance the DOF.
The FOGNA is designed systematically by appropriately deploying the physical sensor positions of three subarrays as shown in Fig. 1,
where subarray 1 is a concatenated nested array (CNA) with $N_1$ physical sensors,
and subarray 2 and 3 are a SLA with big inter-spacings among sensors.
\subsection{Structure of the Proposed FOGNA Based on FOECA}

\begin{definition}
(FOGNA): The FOGNA consists of three subarrays with the number of physical sensors $N=N_1+N_2+N_3$, where $N_1 (N_1\geq2)$, $N_2$
and $N_3$ represent the number of physical sensors in subarray 1, 2 and 3.
These sensors in FOGNA are located at positions given by the set $\mathbb{S}_1$, $\mathbb{S}_2$ and $\mathbb{S}_3$, respectively,
which can be represented as follows
\begin{equation}
\label{wh5}
\begin{aligned}
\mathbb{S}&=\mathbb{S}_1\cup\mathbb{S}_2\cup\mathbb{S}_3,\\
\mathbb{S}_1
&=\{ 0:M_1-1 \}\cdot d\ \cup\\
&\ \ \ \ \{ M_1:M_1+1:M_1+(M_1+1)(M_2-1) \}\cdot d\ \cup\\
&\ \ \ \ \{M_1+(M_1+1)(M_2-1)+1:\\
&\ \ \ \ 2M_1+(M_1+1)(M_2-1) \}\cdot d,\\
\mathbb{S}_2
&=\{ 4E_1+1:2E_1+1:2E_1+N_2(2E_1+1) \}\cdot d\\
\mathbb{S}_3
&=\{ 2E_2:2E_2:2N_3E_2\} \cdot d, \\
E_1&=-2(\lceil\frac{N_1-1}{4}\rfloor)^2+(N_1-1)\lceil\frac{N_1-1}{4}\rfloor+N_1-1,\\
E_2&=2E_1+N_2(2E_1+1),\\
M_1&=\lceil\frac{N_1-1}{4}\rfloor, \ M_2=N_1-2M_1=N_1-2\lceil\frac{N_1-1}{4}\rfloor,
\end{aligned}
\end{equation}
where $E_1$ and $E_2$ are the aperture of the subarray 1 and 2 respectively.
Subarray 1 is CNA and also composed of three ULAs,
where $M_1$ represents the number of physical sensors in the first and third ULAs of CNA,
and $M_2$ represents the number of physical sensors in the second ULA of CNA.
The structure of the FOGNA designed based on FOECA is shown as Fig. 1.
\end{definition}

\begin{figure*}
 \center{\includegraphics[width=17cm]  {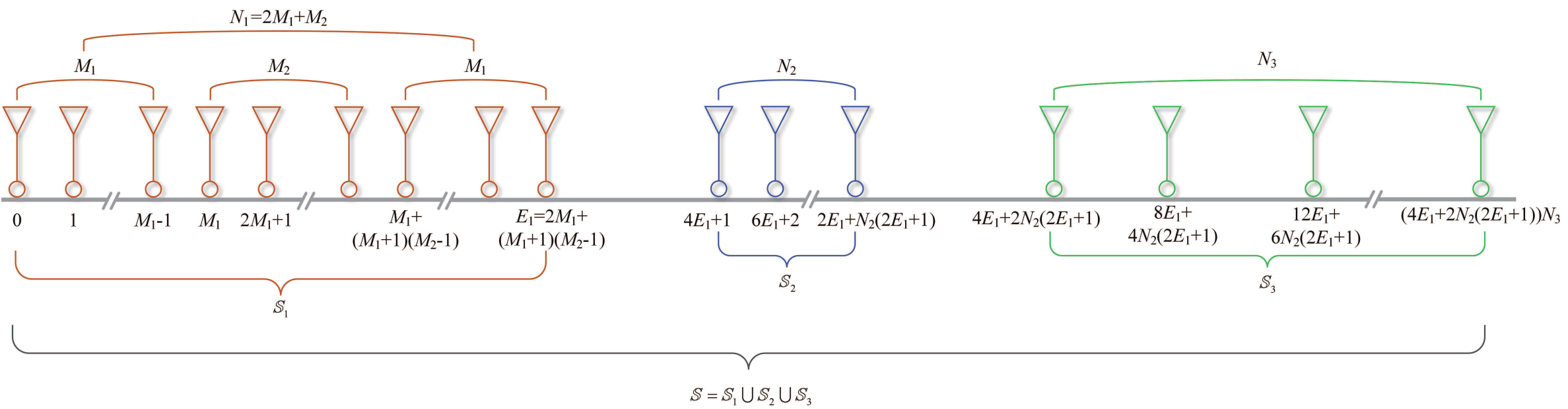}}
 \caption{\label{1} Structure of FOGNA}
\end{figure*}

\subsection{Consecutive Lags of the FOGNA}

\begin{lemma}
The FOECA of FOGNA is hole-free.
\end{lemma}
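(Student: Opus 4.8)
The plan is to show that the underlying set $\Phi^u$ in (\ref{w31}) is a contiguous block of integer multiples of $d$, i.e. $\Phi^u=\{-L : L\}\cdot d$ with $L=\max\Phi^u/d$. By Remark 1 the set $\Phi^u$ is symmetric about the origin, so it suffices to prove that its non-negative part equals $\{0 : L\}\cdot d$ with no missing lag. Throughout I would lean on the two algebraic rewritings that make the three cases tractable: $\Phi_2$ in (\ref{w27}) is the difference co-array of the sum co-array, since $p_{l_1}-p_{l_2}+p_{l_3}-p_{l_4}=(p_{l_1}+p_{l_3})-(p_{l_2}+p_{l_4})$, while $\Phi_1$ in (\ref{w28}) is a ``sum-of-three minus one'' co-array and $\Phi_3=-\Phi_1$. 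Realizing a target lag then reduces to writing it either as a difference of two sum-co-array values (case 2) or as a three-sum minus one sensor (cases 1 and 3).

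First I would settle the base region. The key sub-claim is that the sum co-array of subarray 1 (the CNA $\mathbb{S}_1$) is hole-free and equals $\{0 : 2E_1\}\cdot d$; this follows from the concatenated-nested structure, in which the bottom dense ULA $\{0:M_1-1\}$ of $M_1$ consecutive sensors fills the gaps between the spacing-$(M_1+1)$ sensors of the middle ULA, and the top dense ULA symmetrically carries the sums up to the aperture $2E_1$. Substituting this into case 2, with both ``minus'' sensors taken inside $\mathbb{S}_1$, immediately yields every lag in $\{-2E_1 : 2E_1\}\cdot d$.

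Next I would extend outward one subarray at a time, matching each subarray's inter-spacing to the reach of the previous stage by a window-overlap argument. For subarray 2, whose sensors $q_m$ are spaced $2E_1+1$, case 2 with $p_{l_1}=q_m$, $p_{l_3}=0$ and $(p_{l_2}+p_{l_4})$ ranging over the hole-free sum co-array of $\mathbb{S}_1$ produces the one-sided window $[q_m-2E_1,\,q_m]\,d$ of width $2E_1+1$; consecutive windows abut ($q_{m+1}-2E_1=q_m+1$) and the first starts at $(2E_1+1)d$, so together with the base they fill $\{0 : E_2\}\cdot d$. For subarray 3, whose sensors $r_k=2kE_2$ are spaced $2E_2$, I would build a two-sided window around each $r_k$ reaching the midpoints $r_k\pm E_2$ to its neighbours; since each window covers up to $(r_k+E_2)d$ while the next covers down from $(r_{k+1}-E_2)d=(r_k+E_2)d$, the windows chain together out to the largest lag $L$, and the first ($r_1=2E_2$) meets the region-II ceiling $E_2 d$ exactly.

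The hard part will be filling each window around a subarray-3 sensor, because beyond $2E_1$ neither the sum co-array nor the three-sum-minus-one co-array of $\mathbb{S}_1\cup\mathbb{S}_2$ is individually hole-free: the sparse spacing $2E_1+1$ of subarray 2 leaves residue bands empty in any single case. The argument must therefore split each window into bands covered by \emph{different} cases, for instance the central band $r_k+[-E_1,\,2E_1]$ by case 1 with the remaining slots in $\mathbb{S}_1$, the upper bands $r_k+[2E_1+1,\,E_2]$ by case 2 using a subarray-2 sensor, and the symmetric lower bands $r_k-[2E_1+1,\,E_2]$ by the mirrored construction (with a subarray-2 sensor in the subtracted slot), and then verify that these bands tile the whole window with no gap. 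This interlocking of the three FOC cases is precisely where the combined framework is indispensable, and confirming that the band endpoints meet cleanly---using $E_1=-2M_1^2+(N_1-1)M_1+(N_1-1)$ and $E_2=2E_1+N_2(2E_1+1)$, and treating the residue of $N_1\bmod 4$ in $M_1=\lceil(N_1-1)/4\rfloor$ separately---is the bulk of the bookkeeping.
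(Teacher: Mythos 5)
Your proposal follows essentially the same three-stage telescoping argument as the paper: the hole-free sum co-array $\{0:2E_1\}\cdot d$ of the CNA, then abutting windows of width $2E_1+1$ anchored at the subarray-2 sensors to reach $\pm E_2$, then two-sided windows of half-width $E_2$ around the subarray-3 sensors (the paper phrases this as the cross-differences $\mathbb{C}(\pm\Omega,\mp\mathbb{S}_3)$ with $\Omega=\mathbb{C}(\mathbb{V}_2,-\mathbb{V}_2)$), with the same reliance on $0\in\mathbb{S}_1$ to pad three-sensor expressions into valid four-index FOC lags. One concrete correction to your illustrative band decomposition: the central band covered by case 1 with one slot at $r_k$ and three in $\mathbb{S}_1$ is $r_k+[-E_1,2E_1]$, which leaves $r_k+[-2E_1,-E_1-1]$ uncovered before your lower band begins at $r_k-(2E_1+1)$; this residue is filled by case 2 via $r_k+0-p_a-p_b$ with $p_a+p_b$ ranging over the hole-free sum co-array $[0,2E_1]$, so the central band should be taken as $r_k+[-2E_1,2E_1]$ using cases 1 and 2 together.
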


\begin{proof}
Firstly, the subarray 1 of FOGNA is a CNA with $N_1$ physical sensors,
which is formally expressed by $M_1,M_2\in \mathbb{N}$, and the physical sensor positions of the CNA are given by $\mathbb{S}_1$ in (\ref{wh5}).

The SCA of the CNA is the virtual array with $2E_1+1$ virtual sensors located at positions given by the set $\mathbb{V}_1=\{0:2E_1 \}\cdot d$.
From the expression of $\mathbb{V}_1$, it can be seen that the SCA is a hole-free array \cite{Robin2017}, which has $2E_1+1$ consecutive lags equivalently.

Secondly, for the subarray 2 of FOGNA, a ULA with $N_2$ physical sensors located at positions given by the set $\mathbb{S}_2$ in (\ref{wh5})
concatenating the SCA of subarray 1 can obtain another virtual array,
whose sensors are located at positions given by the set $\mathbb{V}_2$
\begin{equation}\nonumber
\begin{aligned}
\mathbb{V}_2&=\mathbb{V}_1\cup \mathbb{S}_2=\{ 0:2E_1 \}\cdot d\ \cup \\
&\ \ \ \ \{ 4E_1+1:2E_1+1:2E_1+N_2(2E_1+1) \}\cdot d.
\end{aligned}
\end{equation}

Furthermore, we can obtain the self-difference and cross-difference sets as follows
\begin{equation}\nonumber
\begin{aligned}
\mathbb{C}(\mathbb{V}_1,-\mathbb{S}_2)&=\{\mu d\ | -2E_1-N_2(2E_1+1) \leq\mu\leq -2E_1-1 \},\\
\mathbb{C}(-\mathbb{V}_1,\mathbb{S}_2)&=\{\mu d\ |\ 2E_1+1\leq\mu\leq 2E_1+N_2(2E_1+1) \},\\
\mathbb{C}(\mathbb{V}_1,-\mathbb{V}_1)&=\{\mu d\ | -2E_1\leq\mu\leq2E_1) \},
\end{aligned}
\end{equation}
and the sensor positions of virtual array are given by the following set
\begin{equation}
\label{wh6}
\begin{aligned}
&\mathbb{C}(\mathbb{V}_2,-\mathbb{V}_2)=\mathbb{C}(\mathbb{V}_1,-\mathbb{S}_2)
\cup\mathbb{C}(\mathbb{V}_1,-\mathbb{V}_1)\cup \mathbb{C}(-\mathbb{V}_1,\mathbb{S}_2)\\
&=\{\mu d\ |-2E_1-N_2(2E_1+1)\leq\mu \leq2E_1+N_2(2E_1+1) \}.
\end{aligned}
\end{equation}

Therefore, for the linear array with sensors located at positions given by the set $\mathbb{S}_1\cup \mathbb{S}_2$,
the virtual array can be obtained based on $(p_{l_1}+p_{l_2}-p_{l_3})d$ and $(-p_{l_1}-p_{l_2}+p_{l_3})d$,
where $p_{l_1},p_{l_2}\in \mathbb{S}_1,p_{l_3}\in \mathbb{S}_2$ \cite{Xiao2023}, whose sensors located at positions given
by the set $\mathbb{C}(\mathbb{V}_2,-\mathbb{V}_2)$ in (\ref{wh6}) are consecutive. That means the virtual array is hole-free.


Thirdly, for the subarray 3 of FOGNA, a ULA with $N_3$ physical sensors located at positions given by the set $\mathbb{S}_3$ in (\ref{wh5})
concatenating the virtual array of sensors located at positions given by the set $\mathbb{C}(\mathbb{V}_2,-\mathbb{V}_2)$ can obtain another virtual array
with sensors located at positions given by the set $\mathbb{V}_3$
\begin{equation}\nonumber
\begin{aligned}
\mathbb{V}_3&=\mathbb{C}(\mathbb{V}_2,-\mathbb{V}_2)\cup\mathbb{S}_3\\
&=\{-2E_1-N_2(2E_1+1):2E_1+N_2(2E_1+1)\}\cdot d \\
&\cup \{ 4E_1+2N_2(2E_1+1):4E_1+2N_2(2E_1+1):\\
&\ \ \ \ 4N_3E_1+2N_3N_2(2E_1+1)\} \cdot d.
\end{aligned}
\end{equation}

For simplifying the representation, the $\mathbb{C}(\mathbb{V}_2,-\mathbb{V}_2)$ is represented as $\Omega$,
the cross-difference sets can be obtained as follows
\begin{equation}\nonumber
\begin{aligned}
&\mathbb{C}(\Omega,-\mathbb{S}_3)=\{ \mu d \ |-2E_1(2N_3+1)-(2E_1+1)(N_2\\
&\ \ \ \ \ \ \ \ \ \ \ \ \ \ \ \ \ +2N_3N_2) \leq \mu \leq -2E_1-N_2(2E_1+1) \},\\
&\mathbb{C}(-\Omega,\mathbb{S}_3)=\{ \mu d \ |\ 2E_1+N_2(2E_1+1) \leq \mu \\
&\ \ \ \ \ \ \ \ \ \ \ \ \ \ \ \leq 2E_1(2N_3+1)+(2E_1+1)(N_2+2N_3N_2) \}.
\end{aligned}
\end{equation}

Further we can obtain the union of $\mathbb{C}(\Omega,-\mathbb{S}_3)$, $\mathbb{C}(-\Omega,\mathbb{S}_3)$ and $\mathbb{C}(\mathbb{V}_2,-\mathbb{V}_2)$ as follows
\begin{equation}
\label{wh7}
\begin{aligned}
\mathbb{C}(\mathbb{V}_3,-\mathbb{V}_3)&=\mathbb{C}(\Omega,-\mathbb{S}_3)\cup\mathbb{C}(\mathbb{V}_2,-\mathbb{V}_2)
\cup\mathbb{C}(-\Omega,\mathbb{S}_3)\\
&=\{\mu d \ | -(2N_3+1)(2E_1+N_2(2E_1+1))\\
&\ \ \ \ \leq \mu \leq (2N_3+1)(2E_1+N_2(2E_1+1))\}.
\end{aligned}
\end{equation}

Therefore, for the linear array with sensors located at positions given by the set $\mathbb{S}_1\cup \mathbb{S}_2\cup \mathbb{S}_3$,
the virtual array can be obtained based on $\pm(p_{l_1}+p_{l_2}-p_{l_3}-p_{l_4})d$ and $\pm(-p_{l_1}-p_{l_2}+p_{l_3}-p_{l_4})d$ ,
where $p_{l_1},p_{l_2}\in \mathbb{S}_1,p_{l_3}\in \mathbb{S}_2,p_{l_4}\in \mathbb{S}_3$, whose sensors located at positions given
by the set $\mathbb{C}(\mathbb{V}_3,-\mathbb{V}_3)$ in (\ref{wh7}) are consecutive. That means the virtual array is hole-free.

To sum up, the FOECA of proposed FOGNA is hole-free, and the total consecutive lags of FOECA are $2(2N_3+1)(2E_1+N_2(2E_1+1))+1$.
\end{proof}

\subsection{The Maximum DOF of FOGNA With the Given Number of Physical Sensors}
The DOF of FOGNA designed based on FOECA can be further increased by optimizing the distribution of the physical sensors
among subarray 1, 2 and 3 for a given number of physical sensors.
\begin{lemma}
To obtain the maximum DOF of FOGNA with the given number of physical sensors,
the number of sensors in the three subarrays is set to

\begin{equation}
\begin{cases}
N_1=Caculated\ by\ Algorithm\ 1,\\
N_2=\lceil\frac{2(N-N_1)-1}{4}\rceil,\\
N_3=\lfloor\frac{2(N-N_1)+1}{4}\rfloor.
\end{cases}
\end{equation}

For the subarray 1, in order to achieve maximum consecutive lags,
$M_1$ and $M_2$ in $N_1=2M_1+M_2$ are set to
\begin{equation}
\begin{cases}
M_1=\lceil\frac{N_1-1}{4}\rfloor,\\
M_2=N_1-2M_1=N_1-2\lceil\frac{N_1-1}{4}\rfloor.
\end{cases}
\end{equation}

\end{lemma}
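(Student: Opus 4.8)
The plan is to read the claim as a constrained discrete-optimisation problem and to solve it in three nested stages. By Lemma~1 the number of consecutive lags of the FOECA equals $2(2N_3+1)\bigl(2E_1+N_2(2E_1+1)\bigr)+1$, so maximising the DOF is equivalent to maximising $F=(2N_3+1)\bigl(2E_1+N_2(2E_1+1)\bigr)$ over all admissible splits $N_1+N_2+N_3=N$ together with the internal split $N_1=2M_1+M_2$ of subarray~1. The key observation is that $E_1$ depends only on $N_1$ (through $M_1,M_2$), whereas $N_2$ and $N_3$ enter $F$ only through the residual budget $N-N_1$; this separation lets me optimise the subarray-1 geometry first, then the outer split, and finally scan over $N_1$.

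First I would fix $N_1$ and maximise the subarray-1 aperture $E_1$, which by Lemma~1 equals the largest sensor position of the CNA and produces a hole-free SCA with $2E_1+1$ consecutive lags. Reading $\mathbb{S}_1$ from (\ref{wh5}) gives $E_1=2M_1+(M_1+1)(M_2-1)$, and substituting $M_2=N_1-2M_1$ turns this into $E_1=-2M_1^{2}+(N_1-1)M_1+(N_1-1)$, a concave quadratic in $M_1$ whose continuous maximiser is $M_1=(N_1-1)/4$. Since $M_1$ is an integer, the optimum is attained at the nearest admissible lattice point, namely $M_1=\lceil\frac{N_1-1}{4}\rfloor$ with $M_2=N_1-2M_1$, which is exactly the second displayed system. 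A larger $E_1$ strictly increases every factor of $F$, so this choice is optimal for subarray~1 irrespective of how the remaining sensors are distributed.

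Next, holding $N_1$ (hence $E_1$ and $c:=2E_1+1$) fixed, I would split the residual budget $S:=N-N_1=N_2+N_3$. Eliminating $N_2=S-N_3$ yields $F(N_3)=(2N_3+1)\bigl(2E_1+(S-N_3)c\bigr)$, once more a concave quadratic, now in the single integer $N_3$; its vertex sits at $N_3^{\star}=\tfrac{S}{2}+\tfrac{2E_1-1}{4(2E_1+1)}$, a bounded fraction above $S/2$ (to leading order the vertex is $\frac{2S+1}{4}$). Rounding $N_3^{\star}$ to the nearest feasible integer produces the stated closed forms $N_3=\lfloor\frac{2(N-N_1)+1}{4}\rfloor$ and $N_2=\lceil\frac{2(N-N_1)-1}{4}\rceil$, and a one-line parity check confirms $\lceil\frac{2S-1}{4}\rceil+\lfloor\frac{2S+1}{4}\rfloor=S$, so the sensor budget is respected for every value of $S$. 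The point needing care here is matching the integer rounding of a concave quadratic to the precise floor/ceiling expressions, which I would verify separately for even and odd $S$.

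The final stage is the outer optimisation over $N_1$, and this is where the main obstacle lies. Substituting the optimal $M_1,M_2,N_2,N_3$ back into $F$ leaves a single function of $N_1$ on the feasible range $2\le N_1\le N-2$, but this function is genuinely non-monotone and admits no clean closed-form maximiser: raising $N_1$ enlarges $E_1$ roughly quadratically, yet it simultaneously depletes the budget $S=N-N_1$ feeding the two outer ULAs, so the competing effects make the optimal $N_1$ an arithmetic function of $N$. The natural and honest resolution is to evaluate the now-explicit $F(N_1)$ at each admissible integer and return the maximiser; this finite one-dimensional search is exactly what Algorithm~1 carries out, which is why $N_1$ is specified algorithmically rather than in closed form while $N_2,N_3,M_1,M_2$ admit explicit expressions.
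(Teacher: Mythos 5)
Your three-stage decomposition --- maximize $E_1$ over $M_1$ for fixed $N_1$, then solve the concave quadratic in $N_3$ for the residual budget $S=N-N_1$, then scan $N_1$ exhaustively (Algorithm~1) --- is exactly the route the paper takes, and your stage~1 is actually more self-contained than the paper's, which merely cites prior work on CNAs for the choice of $M_1$. The gap sits in stage~2, at precisely the point you defer (``verify separately for even and odd $S$''): that verification fails for odd $S$. You correctly place the vertex at $N_3^{\star}=\frac{S}{2}+\frac{2E_1-1}{4(2E_1+1)}$, which lies in $\left[\frac{S}{2}+\frac{1}{12},\,\frac{S}{2}+\frac{1}{4}\right)$ for all $E_1\ge 1$. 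For even $S=2k$ the nearest integer is $k=\lfloor\frac{2S+1}{4}\rfloor$ and the stated formula is recovered. But for odd $S=2k+1$ the vertex lies in $[k+0.58,\,k+0.75)$, so by symmetry of the concave parabola the integer maximizer is $k+1$, whereas $\lfloor\frac{2S+1}{4}\rfloor=k$. Hence ``rounding to the nearest feasible integer'' does \emph{not} produce the stated closed forms when $S$ is odd; it produces the swapped assignment $N_3=(S+1)/2$, $N_2=(S-1)/2$, which is strictly better: a direct expansion shows the swap gains $2(2E_1-1)$ in DOF. Concretely, for $N_1=4$ (so $E_1=4$) and $S=5$, the stated $(N_2,N_3)=(3,2)$ yields DOF $351$ while $(2,3)$ yields $365$.

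To be fair, the paper's own proof commits essentially the same slip: it bounds the vertex above by $\frac{2S+1}{4}$ and then floors that \emph{bound}, rather than rounding the vertex itself to the nearest integer, which is not a valid way to maximize a concave quadratic over $\mathbb{Z}$. Tellingly, the paper's numerical tables quietly use the swapped (correct) values whenever $S$ is odd --- FOGNA $(4,2,3)$ for $N=9$ and $(12,5,6)$ for $N=23$ in Table~III --- in contradiction with the lemma's formula. The concrete repair for your write-up is to carry out the parity case analysis you postponed and conclude $N_2=\lfloor\frac{N-N_1}{2}\rfloor$ and $N_3=\lceil\frac{N-N_1}{2}\rceil$ (equivalently, the lemma's two expressions with the roles of $N_2$ and $N_3$ interchanged); the budget identity you checked still holds, and stages~1 and~3 go through unchanged.
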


\begin{proof}
To obtain the maximum DOF of FOGNA,
it is equivalent to solve the following optimization problem

\begin{equation}\nonumber
\begin{aligned}
\underset{N_1,N_2,N_3\in \mathbb{N}_+}{maximize}\ \ 2(2N_3+1)(2E_1+N_2(2E_1+1))+1,\\
subject\ to\ N_1+N_2+N_3=N. \ \ \ \ \ \ \ \ \ \ \ \ \ (P1)
\end{aligned}
\end{equation}

For problem (P1), it is known that the objective function is a cubic function.
Due to the change in sign of the second derivative (i.e. curvature) of a cubic function, the function curve may bend upwards or downwards,
making it a non-convex optimization problem.
In a non-convex optimization problem, multiple local extrema exist within the domain
of definition for independent variable, so a global optimum solution cannot be obtained directly \cite{Krentel1986}.
Therefore, we seek the optimal solution to problem (P1) by reducing unknown variables.
To be specific, assuming $N_1$ for subarray 1 is known, to achieve maximum consecutive lags of SCA obtained by CNA,
$E_1$ is also known at this time according to \cite{Robin2021}.
Therefore, the problem (P1) is changed into an optimization problem (P2) with only two unknown variables $N_2$ and $N_3$, which is expressed as follows

\begin{equation}\nonumber
\begin{aligned}
\underset{N_1,N_2,N_3\in \mathbb{N}_+}{maximize}\ \ 2(2N_3+1)(2E_1+N_2(2E_1+1))+1,\\
subject\ to\ N_2+N_3=N-N_1. \ \ \ \ \ \ \ \ \ \ \ \ \ \ (P2)
\end{aligned}
\end{equation}

For problem (P2), firstly, the maximum consecutive lags of subarray 1 can be derived when the number of physical sensors $N_1$ is known.
When subarray 1 is CNA, according to \cite{Robin2017}, \cite{Robin2021},
it is composed of three ULAs with symmetric properties, which is formally defined by
the given $M_1, M_2 \in \mathbb{N}$, and the inter-spacings of sensors in CNA are given by
\begin{equation}
 \mathcal{D}_{CNA}= \{ 1^{M_1},\ \  (M_1+1)^{(M_2-1)},\ \  1^{M_1} \}\cdot d,
\end{equation}
where, $a^b$ represents $a$ appearing $b$ times and $a$ represents the spacing between two physical sensors.
The aperture $E_1$ of the CNA and the number of physical sensors $N_1$ can be obtained as follows
\begin{equation}
\begin{cases}
N_1=2M_1+M_2,\\
E_1=2M_1+(M_1+1)(M_2-1).
\end{cases}
\end{equation}

To obtain the maximum DOF of CNA based on SCA, $M_1$ and $M_2$ are obtained as
\begin{equation}
\begin{cases}
M_1=\lceil\frac{N_1-1}{4}\rfloor,\\
M_2=N_1-2\lceil\frac{N_1-1}{4}\rfloor,
\end{cases}
\end{equation}
where $\lceil\cdot\rfloor$ denotes rounding to the nearest integer.
Substituting $M_1$ and $M_2$ into the aperture $E_1$ of the CNA yields
\begin{equation}
\label{w33}
\begin{aligned}
E_1=-2(\lceil\frac{N_1-1}{4}\rfloor)^2+(N_1-1)\lceil\frac{N_1-1}{4}\rfloor+N_1-1.
\end{aligned}
\end{equation}

At this time, it is a convex optimization problem for (P2),
which can directly yield the global optimum within the domain of definition for independent variable \cite{Krentel1986}.
To solving the solution of problem (P2), substituting $N_2=N-N_1-N_3$ into DOF yields

\begin{equation}
\begin{aligned}
f(N_3)&\triangleq
\text{DOF}&\\
&=2(2N_3+1)(2E_1+N_2(2E_1+1))+1\\
&=2[-2N_3^2(2E_1+1)+(4E_1+2(N-N_1)(2E_1+1)\\
&-(2E_1+1))N_3+2E_1+(N-N_1)(2E_1+1)]+1.
\end{aligned}
\label{wang3}
\end{equation}


If $N_1$ is known, $E_1$ is also known.
Therefore, when the total number of physical sensors $N$ is given,
the unknown parameter in (\ref{wang3}) is only $N_3$, which is a quadratic function $f(N_3)$ about $N_3$.
Furthermore, the solution to problem (P2) is to find the maximum value of the quadratic function $f(N_3)$ with respect to $N_3$.
As it is a convex optimization problem, according to the properties of a quadratic function,
it can be known that its maximum value is obtained at the first derivative of $f(N_3)$ equaling to zero \cite{Krentel1986}.
The first derivative of $f(N_3)$ is derived as follows

\begin{equation}
\label{wh4}
\begin{aligned}
&\frac{\partial f(N_3)}{\partial N_3}=-4N_3(2E_1+1)\\
&\ \ \ \ \ \ \ \ +(4E_1+2(N-N_1)(2E_1+1)-(2E_1+1)).
\end{aligned}
\end{equation}

When $\frac{\partial f(N_3)}{\partial N_3}=0$, we can solve $N_3$ as follows
\begin{equation}
\begin{aligned}
N_3=\frac{\frac{4}{2+1/E_1}+2(N-N_1)-1}{4}.
\end{aligned}
\end{equation}

$N_1\geq2$ can be obtained from Definition 8. At this time $E_1\geq1$,
therefore $\frac{4}{2+1/E_1}\leq2$.
Further we can derive $\frac{1+2(N-N_1)}{4}\geq\frac{\frac{4}{2+1/E_1}+2(N-N_1)-1}{4}$,
and let $N_3=\lfloor\frac{2(N-N_1)+1}{4}\rfloor$ because of $N_3\in\mathbb{N}_+$.


Next, we discuss the value of $N_1$. According to the above analysis,
it can be concluded that $N_2=\lceil\frac{2(N-N_1)-1}{4}\rceil$, $N_3=\lfloor\frac{2(N-N_1)+1}{4}\rfloor$, $M_1=\lceil\frac{N_1-1}{4}\rfloor$,
$M_2=N_1-2\lceil\frac{N_1-1}{4}\rfloor$.
And substituting these variables into DOF yields

\begin{equation}
\begin{aligned}
h(N_1)&\triangleq\text{DOF}=2(2(\lfloor\frac{2(N-N_1)+1}{4}\rfloor)+1)(2E_1\\
&\ \ \ \ \ \ \ \ +(\lceil\frac{2(N-N_1)-1}{4}\rceil)(2E_1+1))+1,\\
E_1&=-2(\lceil\frac{N_1-1}{4}\rfloor)^2+(N_1-1)\lceil\frac{N_1-1}{4}\rfloor+N_1-1.
\end{aligned}
\label{wang4}
\end{equation}

At this point, our goal is to obtain the maximum value of function $h(N_1)$ .
However, according to the analysis of (\ref{wang4}), since $h(N_1)$ is a cubic function of the independent variable $N_1$,
solving the maximum value is more complex for a cubic function with multiple extremum points \cite{Krentel1986}.
So we hope to find the maximum value of $h(N_1)$ through other methods.
Further analysis reveals that the value of $N_1$ ranges from 2 to $N$,
therefore constructing the Algorithm 1 to search the maximum DOF of FOGNA within the range from 2 to $N$
and the number of physical sensors of three subarrays for FOGNA.

\end{proof}

\begin{table}
\begin{center}
\label{tab1}
\begin{tabular}{ l }
\hline
\textbf{Algorithm 1: Search for the optimal array parameters } \\
\ \ \ \ \ \ \ \ \ \ \ \ \ \ \ \ \ \textbf{of FOGNA}  \\
\hline
\bf{Input}: $N$. \\
Initialization: $\text{DOF}^*=$ 0, $N_1^*=$ 0, \\
\ \ \ \ \ \ \ \ \ \ \ \ \ \ \ \ $N_2^*=$ 0, $N_3^*=$ 0.   \\
\ \ \ \ for \ \ $N_1=1$ to $N$.   \\
\ \ \ \ \ \ \ \ \ \ $N_2=\lceil\frac{2(N-N_1)-1}{4}\rceil$.\\
\ \ \ \ \ \ \ \ \ \ $N_3=\lfloor\frac{2(N-N_1)+1}{4}\rfloor$. \\
\ \ \ \ \ \ \ \ \ \ $M_1=\lceil \frac{N_1-1}{4} \rfloor$. \\
\ \ \ \ \ \ \ \ \ \ $M_2=N_1-2\lceil \frac{N_1-1}{4} \rfloor$. \\
\ \ \ \ \ \ \ \ \ \ $E_1=-2(\lceil\frac{N_1-1}{4}\rfloor)^2+(N_1-1)\lceil\frac{N_1-1}{4}\rfloor+N_1-1$.\\
\ \ \ \ \ \ \ \ \ \ $E_2=2E_1+N_2(2E_1+1)$.\\
\ \ \ \ \ \ \ \ \ \ $\text{DOF}=2(2N_3+1)(2E_1+N_2(2E_1+1))+1$ \\
\ \ \ \ \ \ \ \ \ \ if \ DOF $>$ $\text{DOF}^*$\\
\ \ \ \ \ \ \ \ \ \ \ \ \ \ $\text{DOF}^* =$ DOF.\\
\ \ \ \ \ \ \ \ \ \ \ \ \ $N_1^*=N_1$, $N_2^*=N_2$, $N_3^*=N_3$.\\
\ \ \ \ \ \ \ \ \ \ end\\
\ \ \ \ end \\
\textbf{Output}: $N_1^*$, $N_2^*$, $N_3^*$, $\text{DOF}^*$ $\rhd$ optimal array parameters.\\
\hline
\end{tabular}
\end{center}
\end{table}

\begin{corollary}
In the case of $N\equiv0 \ (mod \ 4)$, the upper bound DOF for FOGNA designed based on FOECA are
\begin{equation}
\mathcal{O}(\frac{N^4}{2}).
\end{equation}
\end{corollary}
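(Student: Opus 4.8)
The plan is to combine the two facts already established. By Lemma~2 the number of consecutive lags, and hence the DOF, of the FOECA of FOGNA equals $2(2N_3+1)(2E_1+N_2(2E_1+1))+1$, while by Lemma~3 the DOF-maximizing allocation for a fixed total $N$ is $N_2=\lceil\frac{2(N-N_1)-1}{4}\rceil$, $N_3=\lfloor\frac{2(N-N_1)+1}{4}\rfloor$, $M_1=\lceil\frac{N_1-1}{4}\rfloor$, $M_2=N_1-2M_1$, with $E_1$ given by~(\ref{w33}). Substituting these reduces the lag count to the single-variable function $h(N_1)$ of~(\ref{wang4}), so the corollary amounts to estimating $\max_{2\le N_1\le N}h(N_1)$ to leading order in $N$ under the hypothesis $N\equiv0\pmod{4}$.

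First I would discharge the rounding operators. When $N\equiv0\pmod{4}$ the candidate maximizer $N_1=N/2$ is an even integer, so $N-N_1$ and $N_1$ are both even; then $\lceil\frac{2(N-N_1)-1}{4}\rceil$ and $\lfloor\frac{2(N-N_1)+1}{4}\rfloor$ each collapse to $\frac{N-N_1}{2}+\mathcal{O}(1)$, and $\lceil\frac{N_1-1}{4}\rfloor=\frac{N_1}{4}+\mathcal{O}(1)$. Feeding $M_1=\frac{N_1}{4}+\mathcal{O}(1)$ into~(\ref{w33}) yields the aperture estimate $E_1=\frac{N_1^2}{8}+\mathcal{O}(N_1)$, i.e.\ subarray~1's aperture grows quadratically in its sensor count.

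Next I would assemble the leading term. With $N_2,N_3=\frac{N-N_1}{2}+\mathcal{O}(1)$ and $E_1=\frac{N_1^2}{8}+\mathcal{O}(N_1)$, the inner factor $2E_1+N_2(2E_1+1)$ is dominated by $N_2\cdot 2E_1=\frac{(N-N_1)N_1^2}{8}+\mathcal{O}(N^2)$; multiplying by $(2N_3+1)=(N-N_1)+\mathcal{O}(1)$ and the prefactor $2$ gives $h(N_1)=\frac{(N-N_1)^2 N_1^2}{4}+\mathcal{O}(N^3)$, exhibiting $h$ as a degree-four polynomial in $N_1$ whose leading part is $\tfrac14[N_1(N-N_1)]^2$. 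Since $N_1(N-N_1)\le(N/2)^2$ with equality at the admissible even integer $N_1=N/2$, the maximum of the leading part is a fixed multiple of $N^4$, attained precisely when the sensors are split evenly between subarray~1 and the pair of extended-spacing subarrays. Hence $\max_{2\le N_1\le N}h(N_1)=\mathcal{O}(N^4)$, which is the order asserted in the corollary, the explicit constant being read off from the $1/8$ in $E_1$ and the $1/2$ factors in $N_2,N_3$.

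The main obstacle I anticipate is justifying that the \emph{integer} maximizer of $h$ really sits at $N_1=N/2$, rather than merely arguing this for the continuous relaxation, because the true $h(N_1)$ carries $\mathcal{O}(N^3)$ correction terms and floor/ceiling jumps. I would handle this by sandwiching: bound $h$ above by its smooth quartic envelope plus the $\mathcal{O}(N^3)$ error, and below by evaluating $h$ exactly at $N_1=N/2$ (where all rounding reductions above are exact), which pins the maximum between two quantities with the same $N^4$ leading term. Alternatively one may invoke the unimodal behaviour of $h$ implied by the convexity analysis of problem (P2) in Lemma~3 together with the finite search of Algorithm~1, which confines the optimizer to an $\mathcal{O}(1)$ neighbourhood of $N/2$; either route suffices for the leading-order, and therefore the big-$\mathcal{O}$, conclusion.
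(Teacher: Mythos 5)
Your argument is correct as a proof that the maximal DOF is $\mathcal{O}(N^4)$, but it takes a genuinely different route from the paper's Appendix~A. The paper never locates the maximizer: it relaxes $N_1$ to its two extremes \emph{independently} in the different factors of $h(N_1)$ --- using $N_1\le N$ to enlarge the aperture to $E_1\le (N-1)\lceil\frac{N-1}{4}\rfloor+N-1\sim\frac{N^2}{4}$, and $N_1\ge 0$ to enlarge $N_2,N_3$ to roughly $\frac{N}{2}$ --- and then multiplies these non-simultaneously-achievable bounds to read off $\mathcal{O}\bigl(2\cdot2\cdot\frac{N^2}{4}\cdot2N\cdot\frac{N}{4}\bigr)=\mathcal{O}(\frac{N^4}{2})$. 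That is shorter and yields a guaranteed (one-sided) upper bound, but nothing more. Your approach instead identifies the optimal split $N_1\approx N/2$ and evaluates the leading term there, which buys a matching lower bound: the order $N^4$ is actually attained, something the paper's relaxation does not show. The price of your sharper analysis is a discrepancy you should state explicitly rather than leave implicit in ``the explicit constant being read off'': with $E_1\sim N_1^2/8$ and $N_2\approx N_3\approx(N-N_1)/2$, your leading term $\tfrac14[N_1(N-N_1)]^2$ peaks at $\frac{N^4}{64}$, not $\frac{N^4}{2}$. This still establishes the literal statement, since constants inside $\mathcal{O}(\cdot)$ are immaterial, but it shows that the constant $\tfrac12$ which the paper carries into Table~I is an artifact of multiplying extremes that cannot hold simultaneously ($E_1\sim N^2/4$ needs $N_1=N$ while $N_2,N_3\sim N/2$ needs $N_1=0$) and is not attained by any admissible $(N_1,N_2,N_3)$. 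Your sandwiching/unimodality argument for confining the integer maximizer to a neighbourhood of $N/2$ is adequate for the leading-order, hence big-$\mathcal{O}$, conclusion.
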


\begin{proof}
See Appendix A.
\end{proof}

From \cite{Yang2023}, the upper bound DOF for different FODCAs and FOGNA are shown in Table I.
It can be seen that the proposed FOGNA greatly increases the DOF.

\begin{table}
\label{tab3}
\begin{center}
\caption{COMPARISON OF DOF FOR DIFFERENT ARRAYS}
\renewcommand{\arraystretch}{1.5} 
\begin{tabular}{ c c c c c }
\hline
\hline
\textbf{Array config.} & FL-NA & SE-FL-NA & FO-Fractal & FOGNA\\
\textbf{DOF}&$\mathcal{O}(\frac{N^4}{128})$ & $\mathcal{O}(\frac{N^4}{64})$ & $\mathcal{O}(\frac{N^4}{32})$ & $\mathcal{O}(\frac{N^4}{2})$\\
\hline
\hline
\end{tabular}
\end{center}
\end{table}

\subsection{Example of the Proposed FOGNA}

\begin{figure*}
\label{example}
 \center{\includegraphics[width=17cm]  {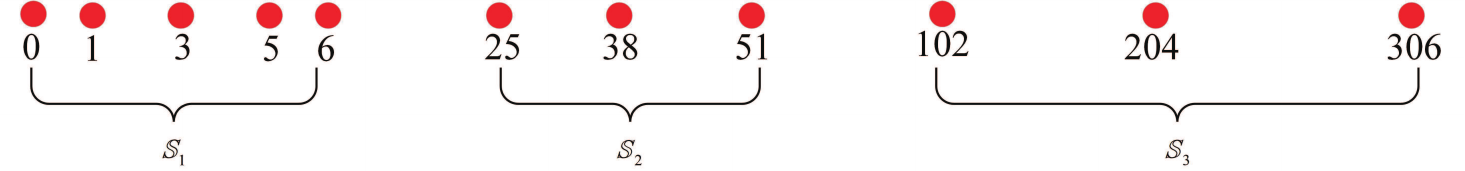}}
 \caption{\label{example} An example for FOGNA while $N=11$}
\end{figure*}

The array structure of FOGNA designed based on FOECA with $N=11$ physical sensors is shown in Fig. \ref{example}. The DOF of FOGNA are obtained as follows.

Firstly, by substituting $N=11$ into Algorithm 1,
the numbers of physical sensors for three subarrays of FOGNA are obtained as $N_1=5$, $N_2=3$ and $N_3=3$.
According to the array structure expression of CNA, the physical sensors of subarray 1 are located at positions given by the set $\mathbb{S}_1=\{ 0,1,3,5,6 \}\cdot d$.

Secondly, the SCA of subarray 1 with sensors located at positions given by the set $\{ 0:12 \}\cdot d$.
It means that subarray 1 can provide 13 consecutive lags based on SCA.
Furthermore, the union of $\mathbb{S}_1$ and $\mathbb{S}_2=\{ 25,38,51 \}\cdot d$ can constitute the linear array with sensors located at positions given by the set
$\mathbb{S}_2\cup \mathbb{S}_1=\{0,1,3,5,6,25,38,51 \}\cdot d$.
At this time, $p_{l_1},p_{l_2}\in\mathbb{S}_1$ and $p_{l_3}\in\mathbb{S}_2$,
the consecutive lags $\{-51:51\}\cdot d$ can be obtained from the cross-difference of $\mathbb{S}_1$ and $\mathbb{S}_2$.

Finally, to further enhance DOF, the physical sensor positions $\mathbb{S}_3=\{102,204,306 \}\cdot d$.
Therefore, we can get the FOGNA with 11 physical sensors located at positions given
by the set $\mathbb{S}=\mathbb{S}_1\cup\mathbb{S}_2\cup\mathbb{S}_3=\{ 0,1,3,5,6,25,38,51,102,204,306\}\cdot d$.
At the point, the consecutive lags for FOGNA are $\{-357:357\}\cdot d$,
and the DOF of FOGNA are $715$, which is higher than the DOF of $517$ for SD-FODC(NA) based on FODCA in \cite{Robin2017} when $N=11$.

\section{PERFORMANCE COMPARISON}
In this section, we provide numerical simulations to demonstrate the superior performance of
FOGNA in terms of DOF, coupling leakage, resolution and the RMSE versus the input SNR, snapshots and the number of sources.
Note that in DOA estimations, the spatial smoothing MUSIC algorithm \cite{Pal22011}, \cite{Liu2015}, \cite{Piya2012},
\cite{You2021} is usually used to estimate DOA.
Moreover, we assume that all incident sources have equal power and the number of sources is known.
To evaluate the results quantitatively, the root-mean-square error (RMSE) of the DOA estimation is defined as an
average over 1000 independent trials:

\begin{equation}
RMSE=\sqrt{\frac{1}{1000D}\sum_{j=1}^{1000}\sum_{i=1}^{D}(\hat{\theta}_i^{j}-\theta_i)^2},
\end{equation}
where $\hat{\theta}_i^{j}$ is the estimation of $\theta_i$ for the $j^{th}$ trial. Similar to \cite{LiuCL2016},
we focus on the DOF obtained by different arrays,
rather than the array aperture, to investigate the overall estimation performance.

\subsection{Comparison of the DOF for Different Arrays}
We compare the DOF of the proposed method with those of FL-NA \cite{Piya2012},
SE-FL-NA \cite{Shen2019}, FO-Fractal (NA) \cite{Yang2023} and SD-FODC (NA) \cite{Guo2024} for given the fixed number of physical sensors,
where FL-NA and SE-FL-NA adopt the  array structure for obtaining maximum DOF and the nested array is used as the  basic array of FO-Fractal and SD-FODC.
The comparing results are listed in Table II and the variations of DOF versus the number of sensors for five methods are shown in Fig. 3.
The results in Table II show that the DOF of FOGNA have significantly increased compared to other four methods under the different given number of physical sensors.
In addition, it can be seen that as the number of sensors $N$ increases, the DOF growth rate of the proposed FOGNA is faster than
those of the other four FODCAs in Fig. 3, and the advantages of the proposed array are more obvious with a large number of physical sensors.

\begin{table}
\begin{center}
\caption{COMPARISON OF DOF FOR DIFFERENT ARRAYS
 \\ BASED ON FOURTH-ORDER CUMULANT.}
\label{tab1}
\renewcommand{\arraystretch}{1.5} 
\begin{tabular}{ c  c  c  c }
\hline
\hline
\textbf{Array} & \textbf{Hole-Free} & \textbf{Number of} & \textbf{DOF}\\
\textbf{Structure}& \textbf{Co-Array}&\textbf{Sensors} & \\
\hline
FL-NA & Yes & $\sum\limits_{m=1}^{4}(N_m-1)+1$ &${L_{Fl}}^a$ \\
SE-FL-NA & Yes &$\sum\limits_{m=1}^{4}(N_m-1)+2$  &${L_{SE}}^b$ \\
FO-Fractal(NA) & Yes & $2N_1-1$ & ${2M^{2r}-1}^c$ \\
SD-FODC(NA) & Yes & $N_1+N_2$ & ${L_{SD}}^d$\\
FOGNA & Yes & $N_1+N_2+N_3$ & ${L_{new}}^e$\\
\hline
\hline
\textbf{Array} & \textbf{($N_1$,$N_2$) or} & \textbf{Number of} & \textbf{DOF}\\
\textbf{Structure} & \textbf{($N_1$,$N_2$,$N_3$)}& \textbf{Sensors}& \\
\hline
FL-NA & (3,3,3,3) & 9 & 217\\
SE-FL-NA & (3,3,3,2) & 9 & 253\\
FO-Fractal(NA) & (5,5) & 9 & 307\\
SD-FODC(NA) & (4,5) & 9 & 317\\
FOGNA & (5,2,2) & 9 & 381\\
\hline
FL-NA & (4,4,3,3) & 11 & 385\\
SE-FL-NA & (4,3,3,3) & 11 & 481\\
FO-Fractal(NA) & (6,6) & 11 & 553\\
SD-FODC(NA) & (6,5) & 11 & 597\\
FOGNA & (5,3,3) & 11 & 715\\
\hline
FL-NA & (6,6,5,5) & 19 & 2161\\
SE-FL-NA & (6,5,5,5)& 19 & 3121\\
FO-Fractal(NA) & (10,10) & 19 & 3541\\
SD-FODC(NA) & (10,9) & 19 & 3775\\
FOGNA & (9,5,5) & 19 & 4599\\
\hline
\hline
\end{tabular}
\end{center}
\footnotesize{$^a$ $L_{FL}=2(\Pi_{m=1}^4N_i+\Pi_{m=1}^3N_i)+1$}\par
\footnotesize{$^b$ $L_{SE}=N_3N_4(2N_1N_2-1)+(N_4-1)(N_1N_2-1)-1$}\par
\footnotesize{$^c$ $M^r=\frac{2N_G^2}{3}-\frac{2N_G}{3}+c_{t_0},N_G=\frac{N+1}{2}$}\par
\footnotesize{$^d$ $L_{SD}=(4d_m + 2)d_n + ((2d_m + 1)\frac{\mu_2+1}{2}+\frac{\mu_1-1}{2})$}\par
\footnotesize{$^e$ $L_{new}=2[(-2N_3^2-N_3)(2E_1+1)+(2E_1+(N-N_1)(2E_1+1))(2N_3+1)]+1$}\\
\end{table}

\begin{figure}[H]
 \center{\includegraphics[width=6cm]  {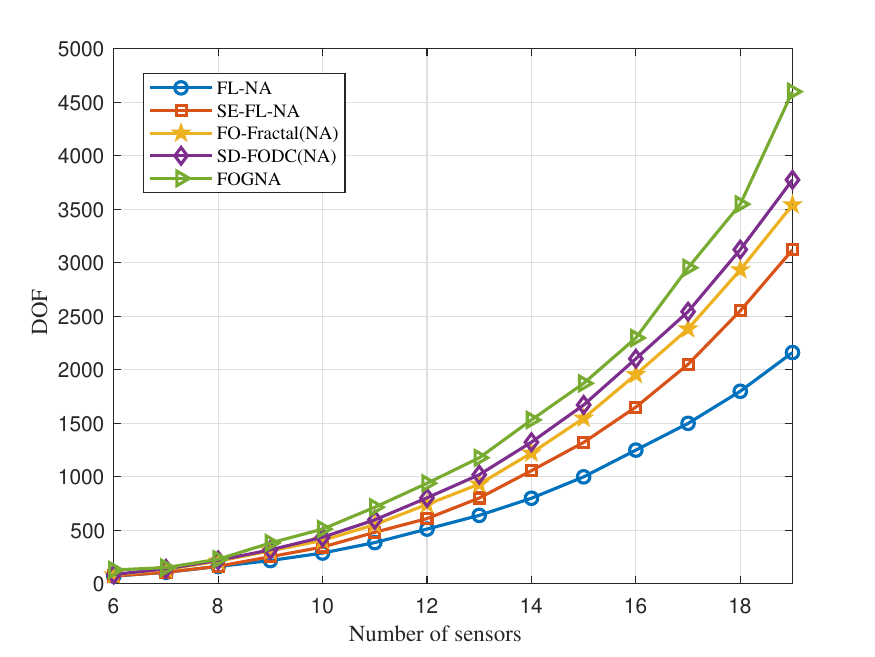}}
 \caption{\label{1} DOF of different arrays}
\end{figure}

\subsection{Mutual Coupling Matrices}
The mutual coupling performance of the proposed co-array is compared with those of state-of-the-art co-arrays, FL-NA, SE-FL-NA, FO-Fractal (NA) and SD-FODC (NA), in term of coupling leakages in this section. As mentioned above, both FO-Fractal and SD-FODC are also using nested arrays as their basic arrays.
Specifically, the mutual coupling model (\ref{wang5}) is characterized by
$c_1=0.3e^{j\pi/3}$, $B=100$ and $c_l=c_1e^{-j(l-1)\pi/8}/l$, for $2\leq l \leq B$.

Firstly, the coupling leakages are calculated by (\ref{w24}) for different configurations with the number of physical sensors
as 9, 10, 11, 9, 21 and 23, respectively.
The results of the coupling leakages of different structure for five co-arrays are listed in Table III.
It can be seen that the coupling leakage $L$ of FOGNA is lower than those of other four FODCAs when the number of sensors $N > 10$.
However, the performance of the coupling leakage for other FODCAs is better than that of
FOGNA when the number of sensors $N = 9$,
where  the reason is that there exist more physical senors with unit inter-spacing in subarray 1 of FOGNA
than those of other FODCAs, resulting in increased coupling leakage of FOGNA.
The same situation can be seen in Table III when $N=10$.
Therefore, the proposed array structure outperforms the other four FODCAs under the case of $N>10$ in terms of coupling leakage.

\begin{table}
\label{tab2}
\begin{center}
\caption{A SUMMARY OF MUTUAL COUPLING LEAKAGE FOR FIVE ARRAY STRUCTURES}
\renewcommand{\arraystretch}{1.5} 
\begin{tabular}{ c  c  c  c  c  c}
\hline
\hline
\textbf{Array } & \textbf{ FL-NA } & \textbf{SE-FL} & \textbf{FO-Fractal} & \textbf{SD-FODC} & \textbf{FOGNA}\\
\textbf{config.} &  & \textbf{-NA} & \textbf{(NA)} & \textbf{(NA)} & \\
\hline
9 sensors    & (3,3,3,3) & (3,3,3,2)  & (5,5)      &(4,5)      & (4,2,3)\\
$L$          & 0.2263    & 0.2257     & 0.2247     & 0.2187    &0.2347  \\
\hline
10 sensors    & (4,3,3,3) & (3,3,3,3)  & (5,6)      &(5,5)      & (4,3,3)\\
$L$          & 0.2563    & 0.2147     & 0.2137     & 0.2139    &0.2236  \\
\hline
11 sensors   & (4,4,3,3) & (4,3,3,3)  & (6,6)      & (6,5)     & (5,3,3)\\
$L$          & 0.2477    & 0.2449     & 0.2444     & 0.2446    &\textbf{0.2137}  \\
\hline
19 sensors   & (6,6,5,5) & (6,5,5,5)  &  (10,10)   &(10,9)     & (9,5,5)\\
$L$          & 0.2460    & 0.2452     & 0.2451     & 0.2452    &\textbf{0.2018}  \\
\hline
21 sensors   & (6,6,6,6) & (6,6,6,5)  &  (11,11)   &(11,10)    & (9,6,6)\\
$L$          & 0.2347    & 0.2346     & 0.2346     & 0.2346    & \textbf{0.2139}  \\
\hline
23 sensors   & (7,7,6,6) & (7,6,6,6)  &  (12,12)   &(12,11)    & (12,5,6)\\
$L$          & 0.2464    & 0.2459     & 0.2459     & 0.2459    &\textbf{0.2077}  \\
\hline
\hline
\end{tabular}
\end{center}
\end{table}


Secondly, the visualizations of mutual coupling matrices for different arrays are shown in Fig. 4,
with the number of sensors set as 23,
where the darker blue color corresponds to the smaller value of the non-diagonal elements
in mutual coupling matrices and yellow color represents the values of diagonal elements of matrices.
It is observed in Fig. 4 that the higher mutual coupling of FL-NA, SE-FL-NA, FO-Fractal (NA) and SD-FODC (NA) occurs in the dense subarray part,
i.e., the nested array, whereas higher mutual coupling of FOGNA is concentrated at both ends of the
subarray 1, where the senor spacing is denser.

\begin{figure*}
  \centering
  \subfigure[]{
    \label{fig:subfig:onefunction}
    \includegraphics[scale=0.22]{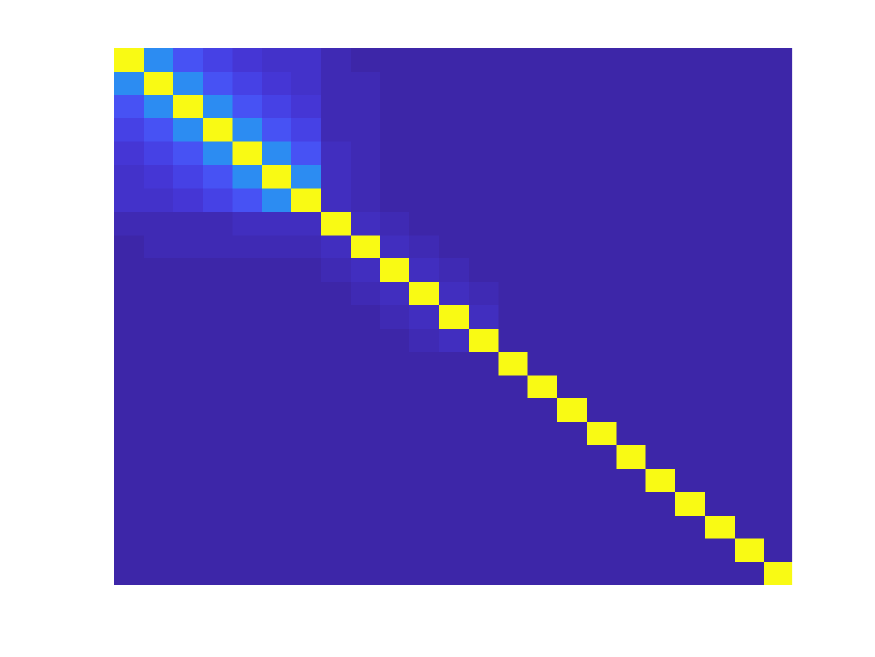}}
  \hspace{0in} 
  \subfigure[]{
    \label{fig:subfig:threefunction}
    \includegraphics[scale=0.22]{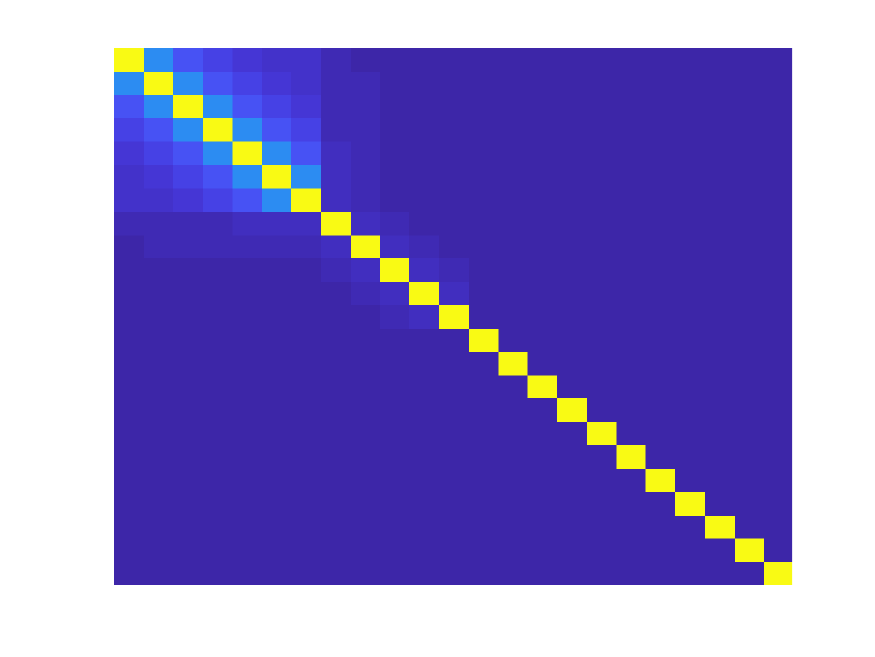}}
  \hspace{0in} 
  \subfigure[]{
    \label{fig:subfig:threefunction}
    \includegraphics[scale=0.22]{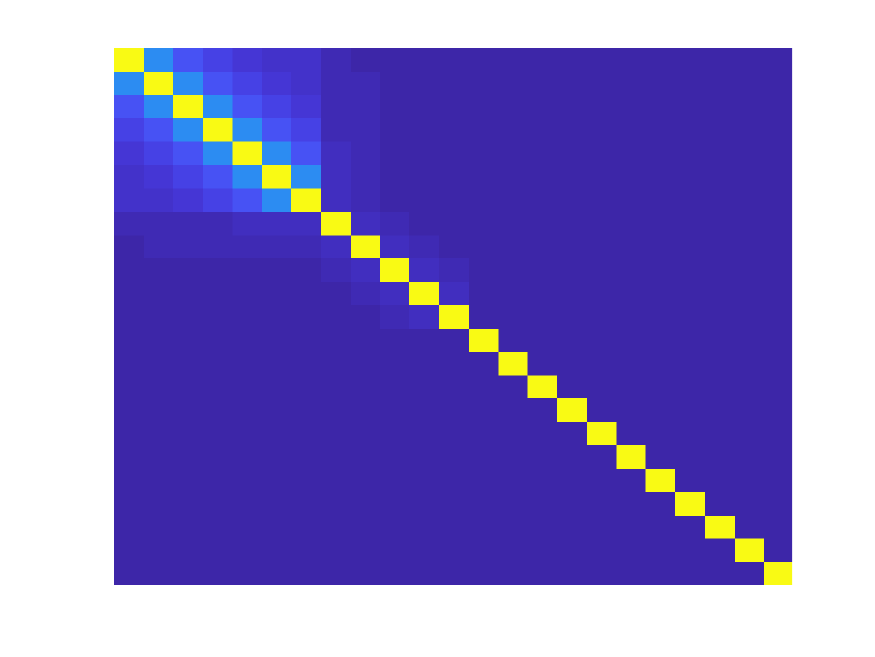}}
    \hspace{0in} 
  \subfigure[]{
    \label{fig:subfig:threefunction}
    \includegraphics[scale=0.22]{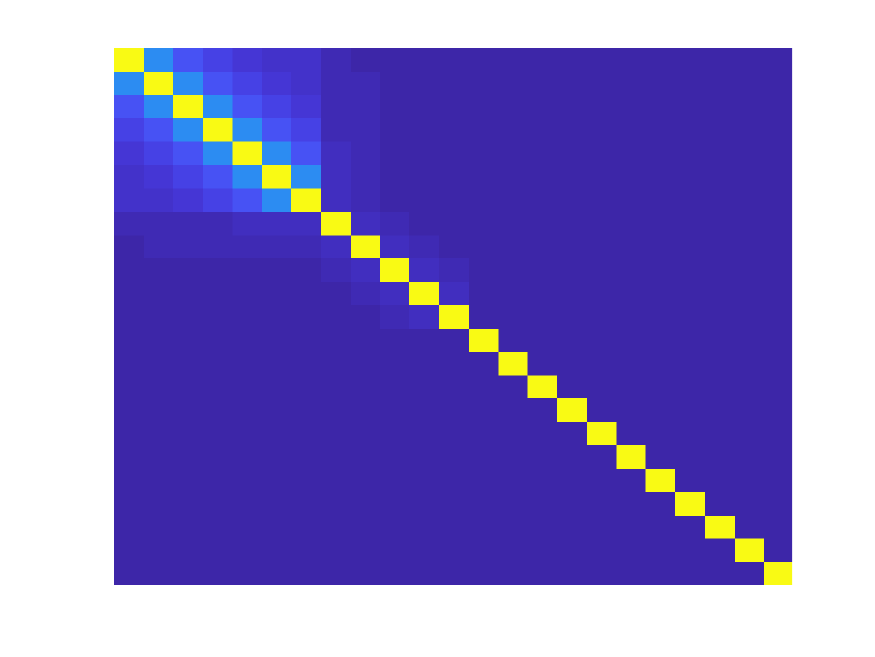}}
    \hspace{0in} 
  \subfigure[]{
    \label{fig:subfig:threefunction}
    \includegraphics[scale=0.22]{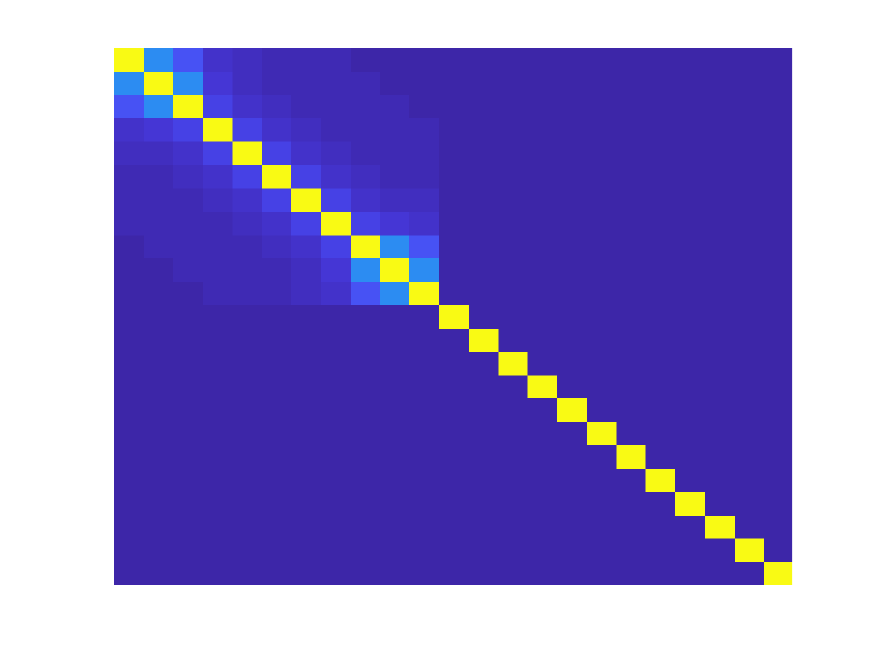}}
  \caption{The magnitudes of the mutual coupling matrices of five arrays with 23-sensors. (a) FL-NA. (b) SE-FL-NA. (c) FO-Fractal(NA). (d) SD-FODC(NA).
(e) FOGNA.}
\end{figure*}

\subsection{Resolution of Different Array Structures}
\begin{figure*}
  \centering
  \subfigure[]{
    \label{fig:subfig:onefunction}
    \includegraphics[scale=0.23]{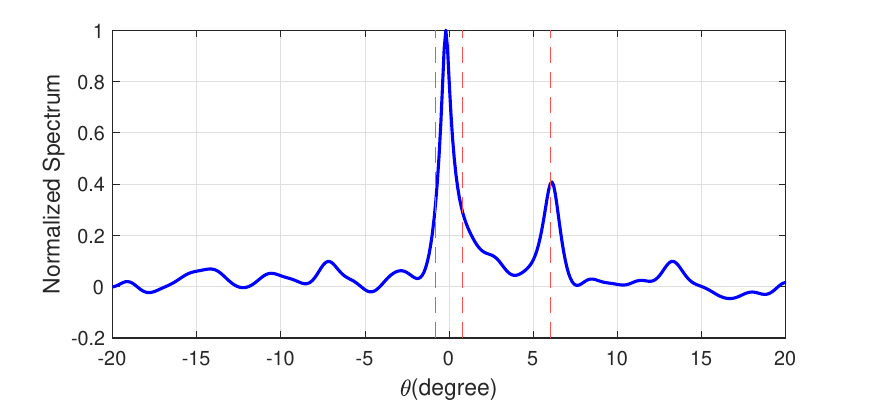}}
  \hspace{0in} 
  \subfigure[]{
    \label{fig:subfig:threefunction}
    \includegraphics[scale=0.23]{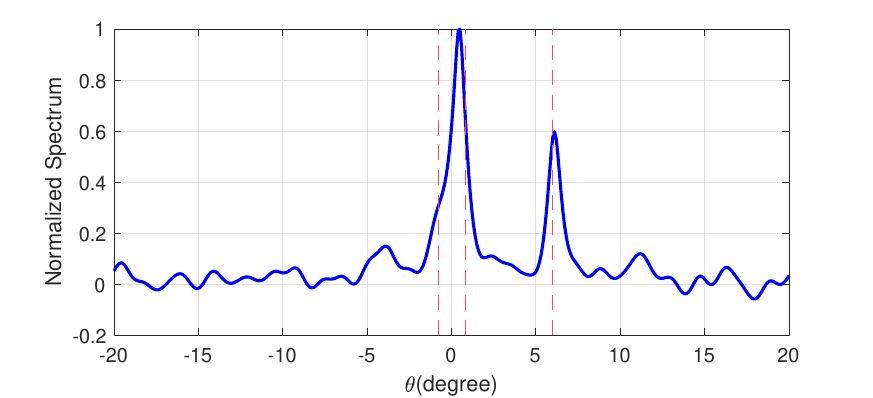}}
  \subfigure[]{
    \label{fig:subfig:threefunction}
    \includegraphics[scale=0.23]{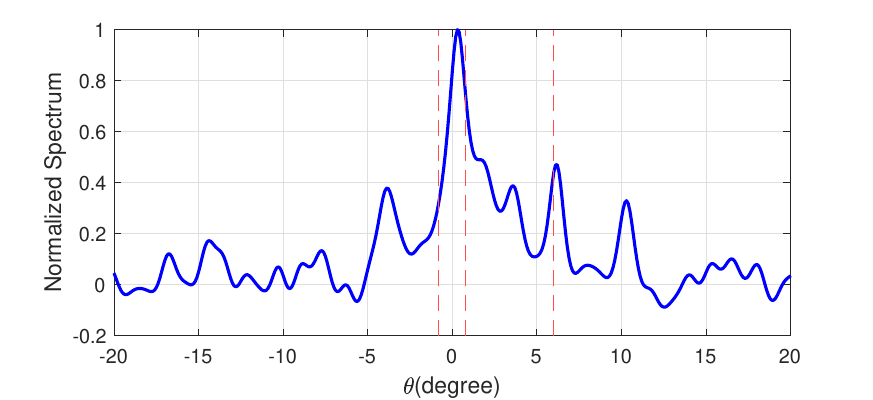}}
  \subfigure[]{
    \label{fig:subfig:threefunction}
    \includegraphics[scale=0.23]{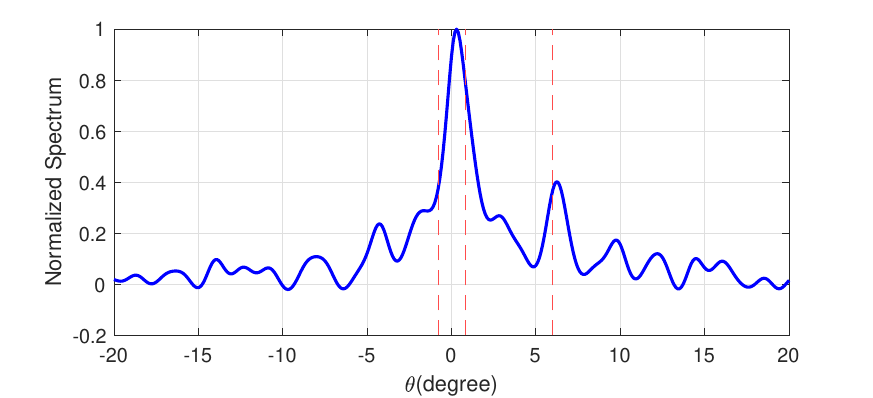}}
  \subfigure[]{
    \label{fig:subfig:threefunction}
    \includegraphics[scale=0.23]{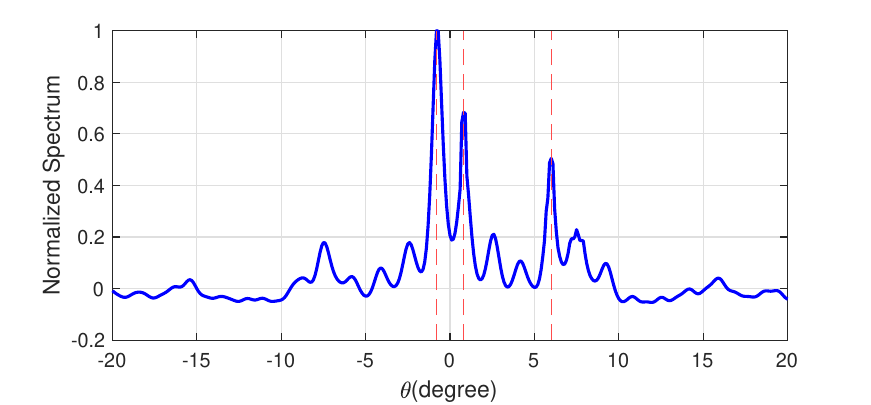}}
  \caption{The special case of DOA estimation result for five arrays with 7-sensors when sources are located at $-0.8^0$, $0.8^0$ and $6^0$.
$SNR = 0$ dB and $K = 10000$. (a) FL-NA. (b) SE-FL-NA. (c) FO-Fractal(NA). (d) SD-FODC(NA). (e) FOGNA.}
\end{figure*}



The resolution is an important metric of the performance for DOA estimation, which is compared among FL-NA, SE-FL-NA, FO-Fractal (NA), SD-FODC (NA)
and FOGNA in the simulation. 7 physical sensors are used to construct five co-array,
while the angle of one source fixed at $-0.8^{\circ}$, and the angle of other sources vary from 0 to 10.
The SNR and snapshots are set as 0 dB and 10000, respectively.
The angles of 3 uncorrelated sources are $-0.8^{\circ}$, $0.8^{\circ}$ and $6^{\circ}$ respectively in Fig 5.
It can be seen that FOGNA can distinguish two sources with the difference of incidence angles reduced to $1.6^{\circ}$, while others can not.

\subsection{DOA Estimation Without Mutual Coupling}
We compare the DOA estimation performance versus input SNR, snapshots and the number of sources for FOGNA without mutual coupling to
those of other four FODCAs in this part,
where 9 physical sensors are used to construct five co-array.

Firstly, there are 40 uncorrelated sources uniformly located at $-60^{\circ}$ to $60^{\circ}$,
and the SNR and snapshots are set as 0 dB and 10000, respectively.
The DOA estimation results are shown in Fig. 6, where only the FL-NA is incapable of resolving all 40 sources.
Furthermore, the FOGNA exhibits lower valley near both ends than those of FL-NA, which can improve the performance of DOA estimation.

\begin{figure*}
  \centering
  \subfigure[]{
    \label{fig:subfig:onefunction}
    \includegraphics[scale=0.15]{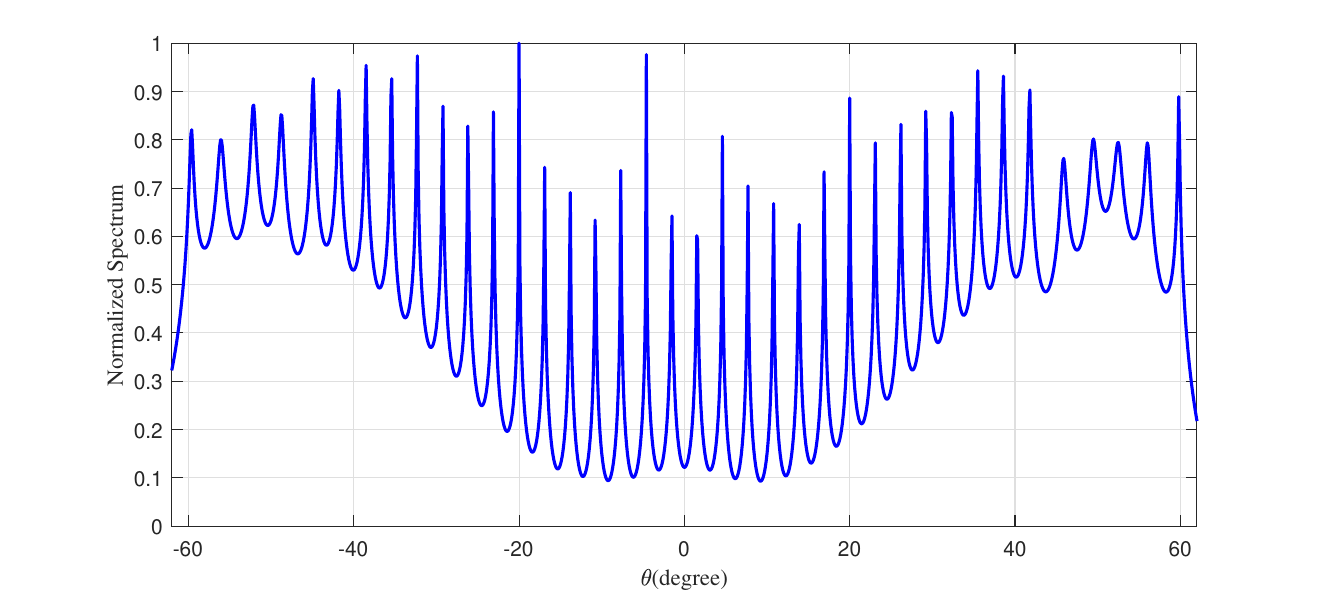}}
  \hspace{0in} 
  \subfigure[]{
    \label{fig:subfig:threefunction}
    \includegraphics[scale=0.15]{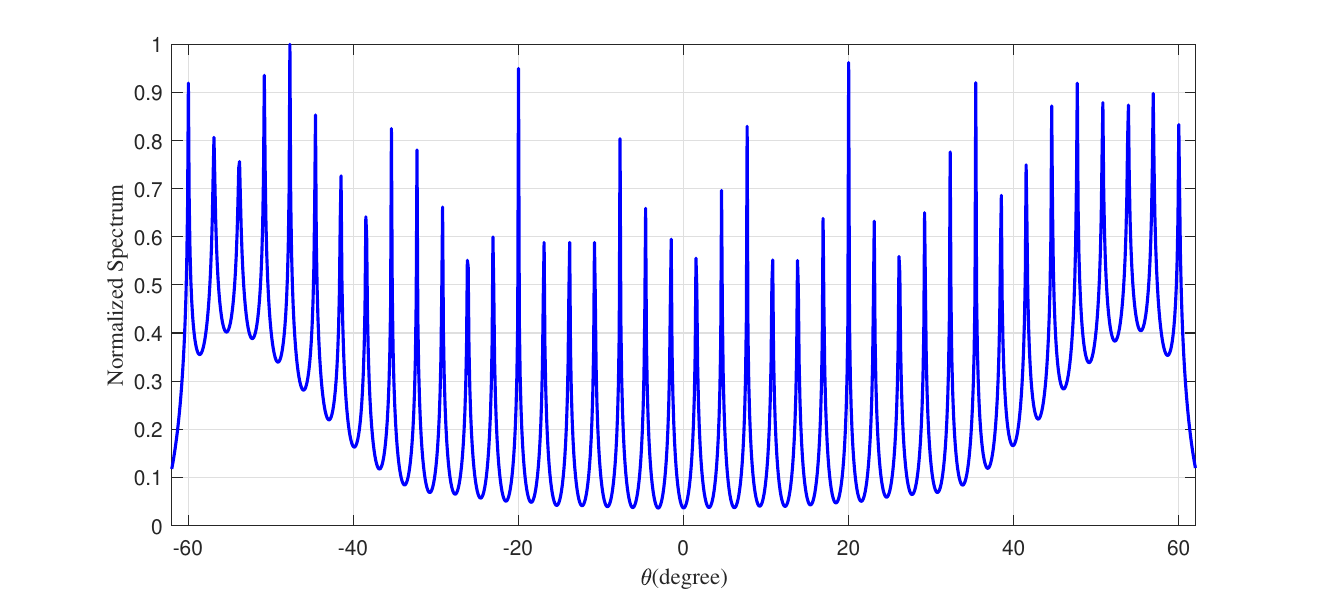}}
  \subfigure[]{
    \label{fig:subfig:threefunction}
    \includegraphics[scale=0.15]{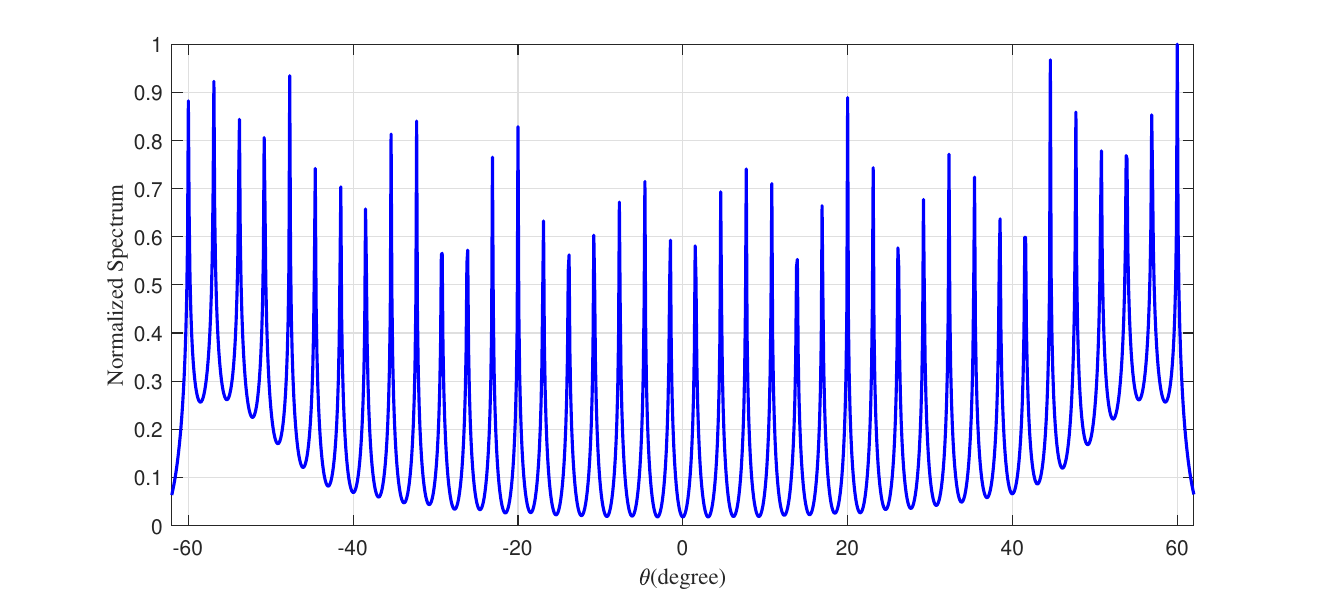}}
  \subfigure[]{
    \label{fig:subfig:threefunction}
    \includegraphics[scale=0.15]{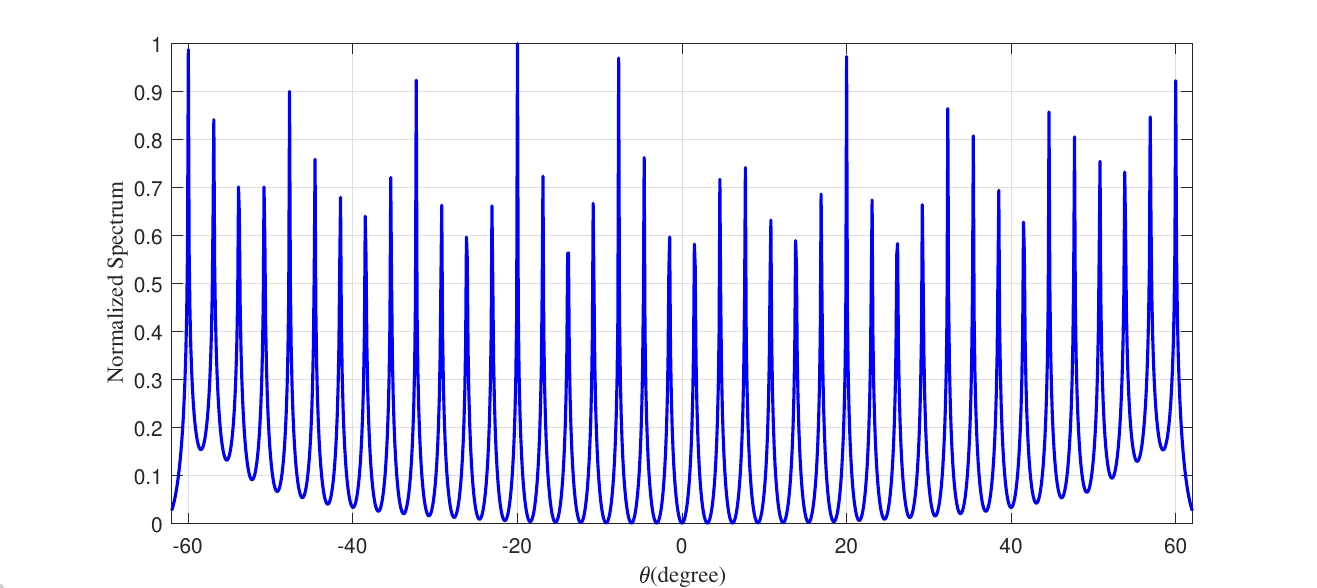}}
  \subfigure[]{
    \label{fig:subfig:threefunction}
    \includegraphics[scale=0.15]{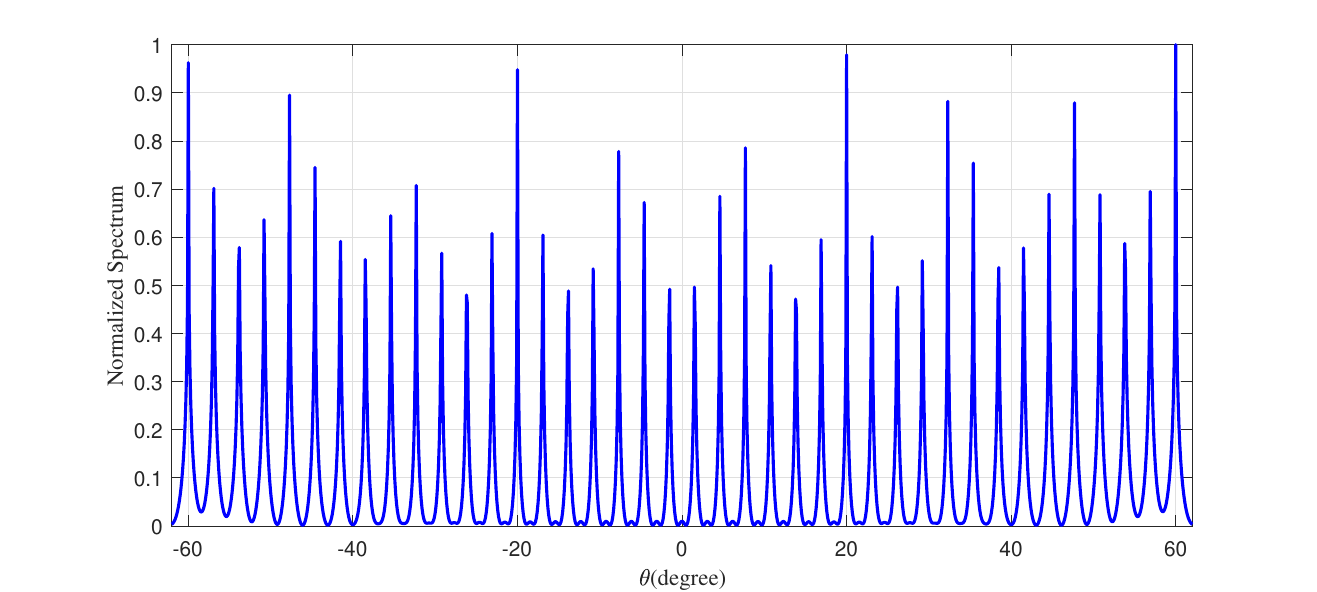}}
  \caption{DOA estimation result for five arrays with 9-sensors when 40 sources are uniformly located at $-60^0$ to $60^0$.
$SNR = 0$ dB and $K = 10000$. (a) FL-NA. (b) SE-FL-NA. (c) FO-Fractal(NA). (d) SD-FODC(NA). (e) FOGNA.}
\end{figure*}

Secondly, the RMSE versus the input SNR, snapshots and the number of sources are studied in the following numerical simulations.
In the first numerical simulation, there are 12 uncorrelated sources uniformly located at $-60^{\circ}$ to $60^{\circ}$
and the snapshots setting as 14000. The SNR ranges from -7dB to 8dB with an interval of 3dB.
The results of RMSE versus SNR for different arrays are shown in Fig. 7(a),
where it can be seen that as the SNR increases, the RMSEs of all arrays decrease,
however the RMSE of FOGNA remaining the lowest.
The second numerical simulation studies the DOA estimation performance with respect to the snapshots changing from 10000 to 16000
and the SNR setting as 5dB.
The results of RMSE versus snapshots are shown in Fig. 7(b), and a similar conclusion can be obtained that
the RMSE of FOGNA is significantly lower than those of other four FODCAs.
In the third numerical simulation, the number of sources change from 12 to 20.
The results of RMSE versus the number of sources are shown in Fig. 7(c),
where it can be seen that as the number of sensors increases, the RMSEs of FL-NA and SE-FL-NA increase steeply
than those of FO-Fractal (NA), SD-FODC (NA) and FOGNA.
Notably, the RMSE of FOGNA remains the smallest compared to other four FODCAs, indicating its superior performance.
\begin{figure}
  \centering
  \subfigure[ RMSE versus SNR]{
    \label{fig:subfig:onefunction}
    \includegraphics[scale=0.18]{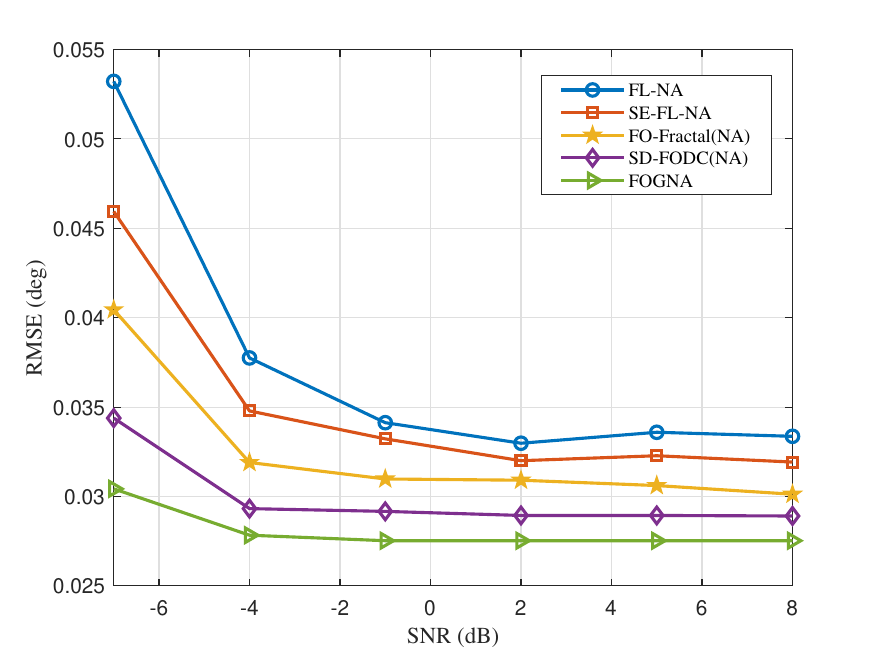}}
   \hspace{0in} 
  \subfigure[RMSE versus Snapshots]{
    \label{fig:subfig:threefunction}
    \includegraphics[scale=0.18]{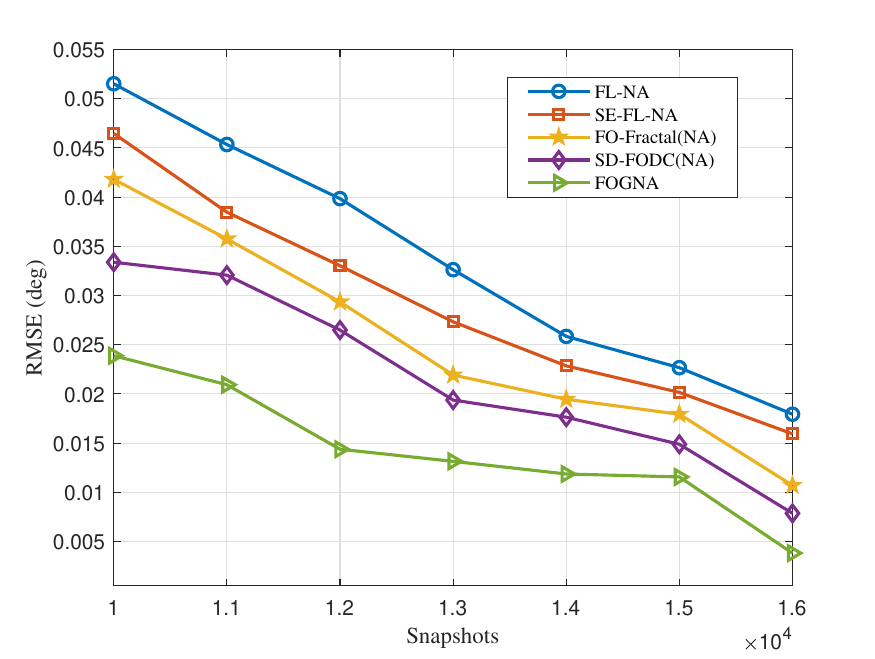}}
    \hspace{0in} 
  \subfigure[RMSE versus The number of sources]{
    \label{fig:subfig:threefunction}
    \includegraphics[scale=0.18]{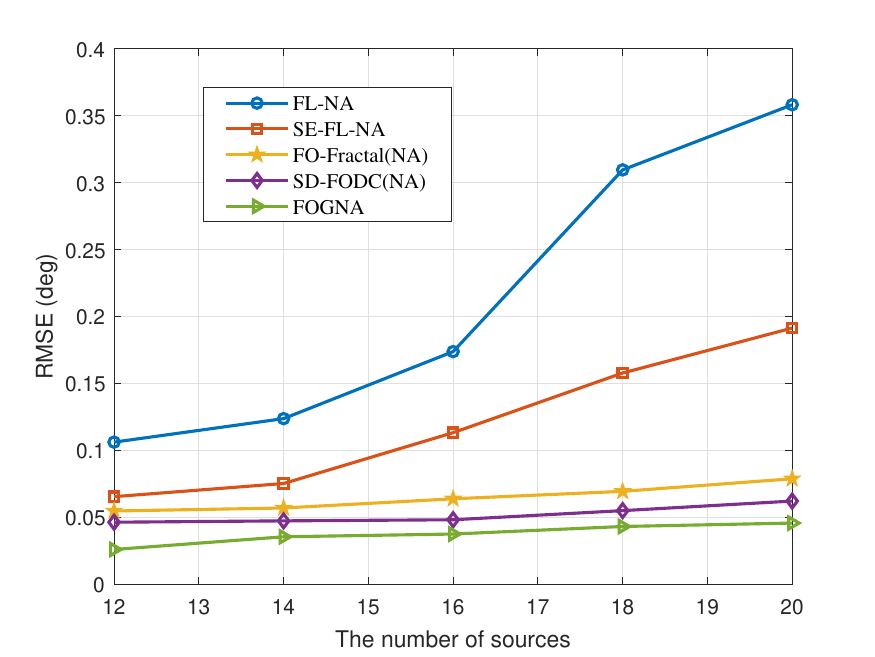}}
  \caption{DOA estimation performance without mutual coupling based on fourth-order cumulants}
\end{figure}

\subsection{DOA Estimation With Mutual Coupling}
We compare the DOA estimation performance versus input SNR, snapshots and the number of sources for FOGNA with mutual coupling to
those of other four FODCAs in this part,
where 11 physical sensors are used to construct five co-array.

Firstly, there are 50 uncorrelated sources uniformly located at $-60^{\circ}$ to $60^{\circ}$,
and the SNR and snapshots are set as 0 dB and 12000, respectively.
The DOA estimation results are shown in Fig. 8, where only the FOGNA is capable of resolving all 50 sources successfully.
It implies that the FOGNA is more effective than other four FODCAs against mutual coupling.

\begin{figure*}
  \centering
  \subfigure[]{
    \label{fig:subfig:onefunction}
    \includegraphics[scale=0.14]{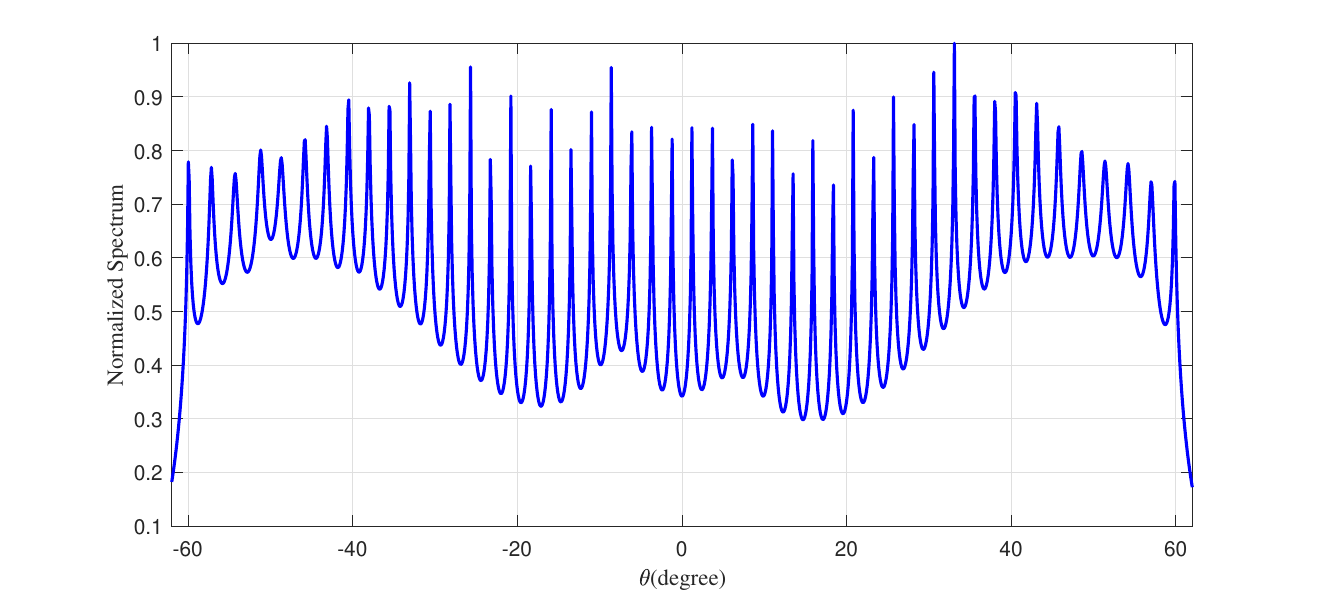}}
  \hspace{0in} 
  \subfigure[]{
    \label{fig:subfig:threefunction}
    \includegraphics[scale=0.14]{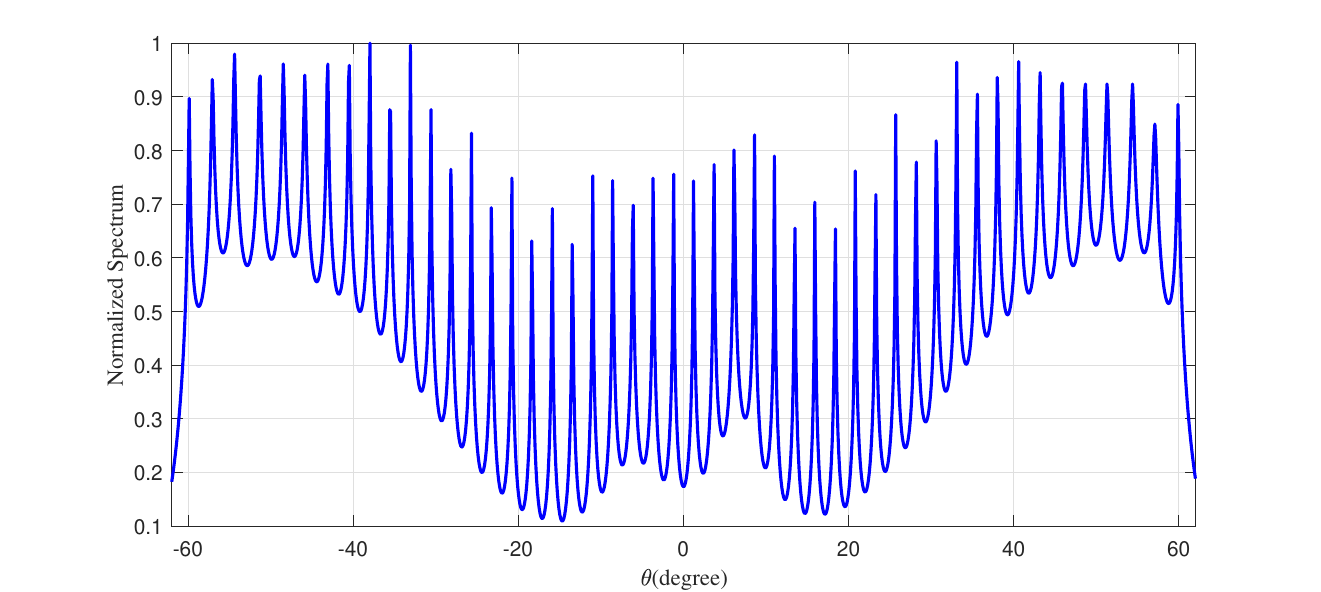}}
  \subfigure[]{
    \label{fig:subfig:threefunction}
    \includegraphics[scale=0.14]{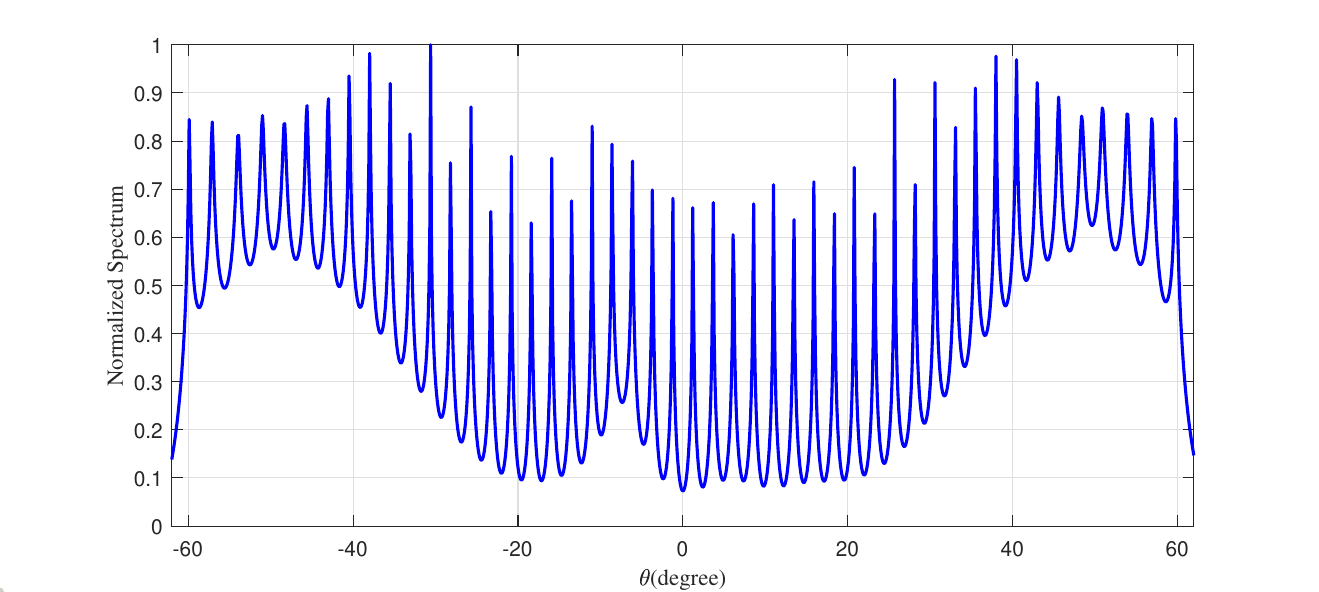}}
  \subfigure[]{
    \label{fig:subfig:threefunction}
    \includegraphics[scale=0.14]{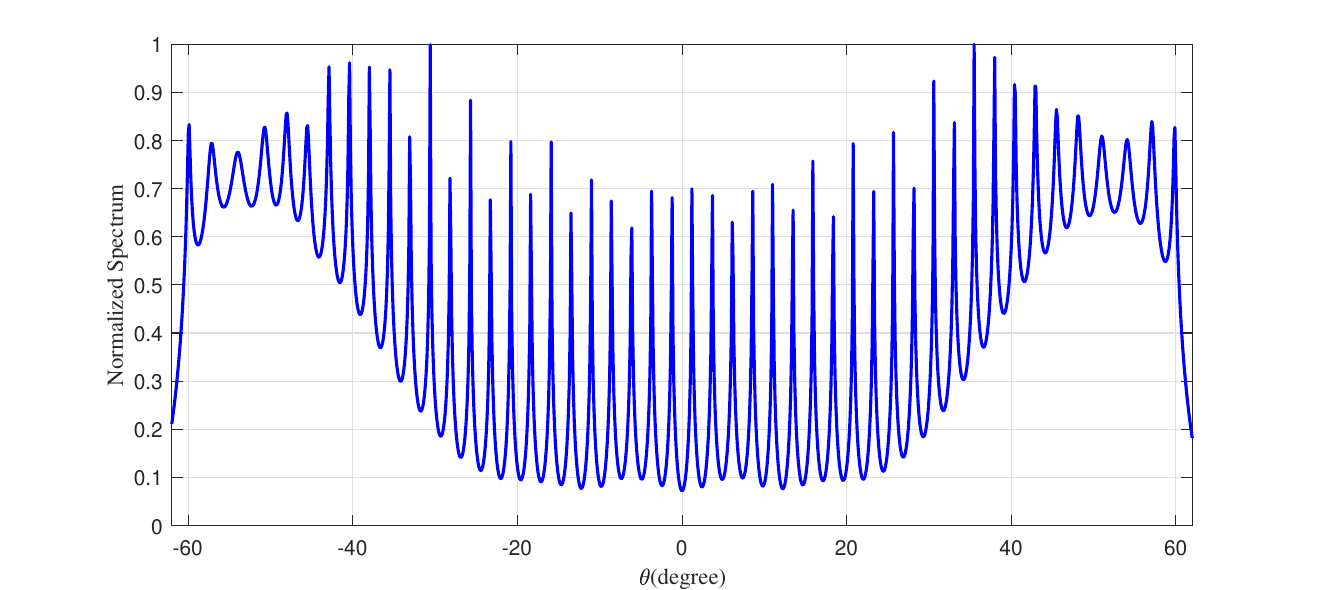}}
  \subfigure[]{
    \label{fig:subfig:threefunction}
    \includegraphics[scale=0.14]{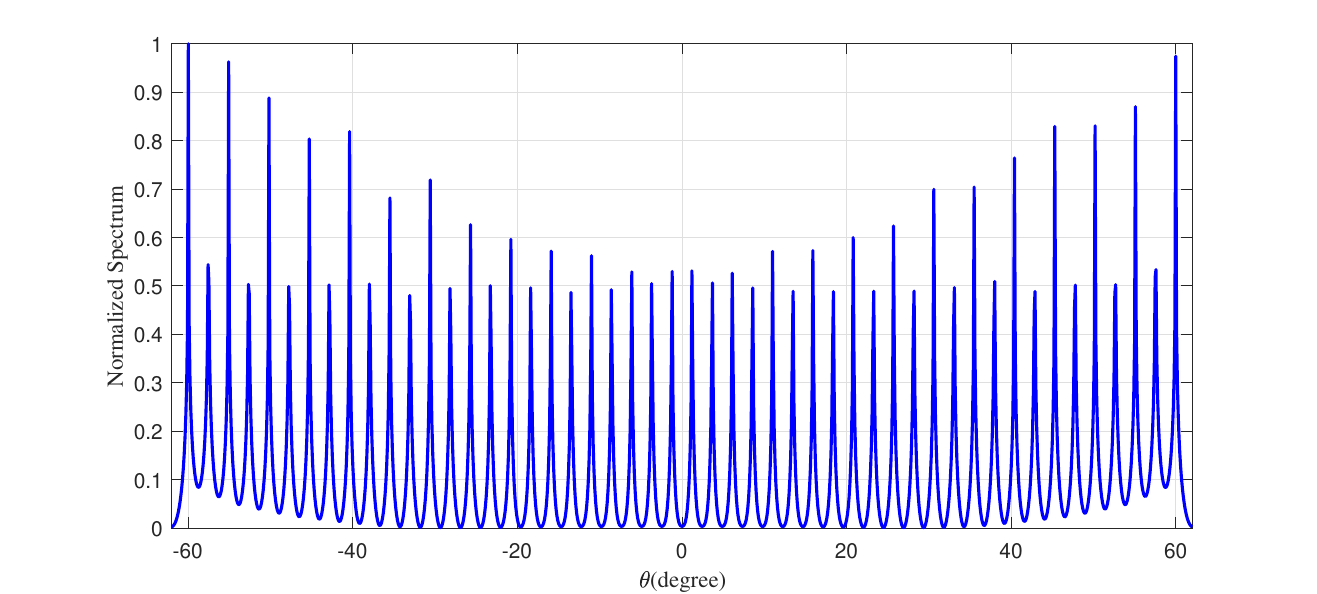}}
  \caption{ DOA estimation result for five arrays with 11-sensors when 50 sources are uniformly located at $-60^0$ to $60^0$.
$SNR = 0$ dB and $K = 12000$. (a) FL-NA. (b) SE-FL-NA. (c) FO-Fractal(NA). (d) SD-FODC(NA). (e) FOGNA.}
\end{figure*}

Secondly, the next numerical simulations focus on the RMSE versus the input SNR, snapshots and the number of sources.
In the first numerical simulation, there are 18 uncorrelated sources uniformly located at $-60^{\circ}$ to $60^{\circ}$,
and the snapshots are set as 15000. The SNR ranges from -7dB to 8dB with an interval of 3dB.
The results of RMSE versus SNR for different arrays are shown in Fig. 9(a),
where the FOGNA yields the lowest RMSE across the entire signal-to-noise ratio range.
It implies that FOGNA outperforms compared to the other four FODCAs in terms of mutual coupling effect.
The second numerical simulation studies the DOA estimation performance with respect to the snapshots changing from 10000 to 16000
and the SNR setting as 5dB.
It is shown that as the snapshots increase, the RMSEs of all arrays decrease,
while the RMSE of FOGNA is always smaller than those of other four FODCAs.
In the third numerical simulation, the number of sources change from 12 to 20.
The results of RMSE versus the number of sources are shown in Fig. 7(c),
which reveals that as the number of sources increase, the RMSEs of all arrays rose accordingly.
Notably, the RMSE of FOGNA increases slowly and remains lower than those of four FODCAs.

\begin{figure}
  \centering
  \subfigure[ RMSE versus SNR]{
    \label{fig:subfig:onefunction}
    \includegraphics[scale=0.18]{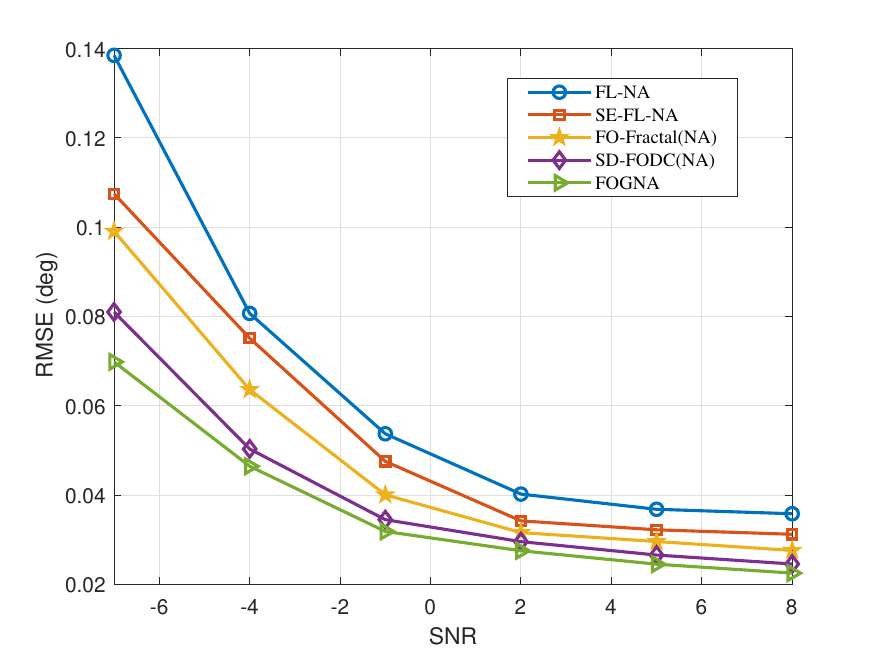}}
  \hspace{0in} 
  \subfigure[RMSE versus Snapshots]{
    \label{fig:subfig:threefunction}
    \includegraphics[scale=0.18]{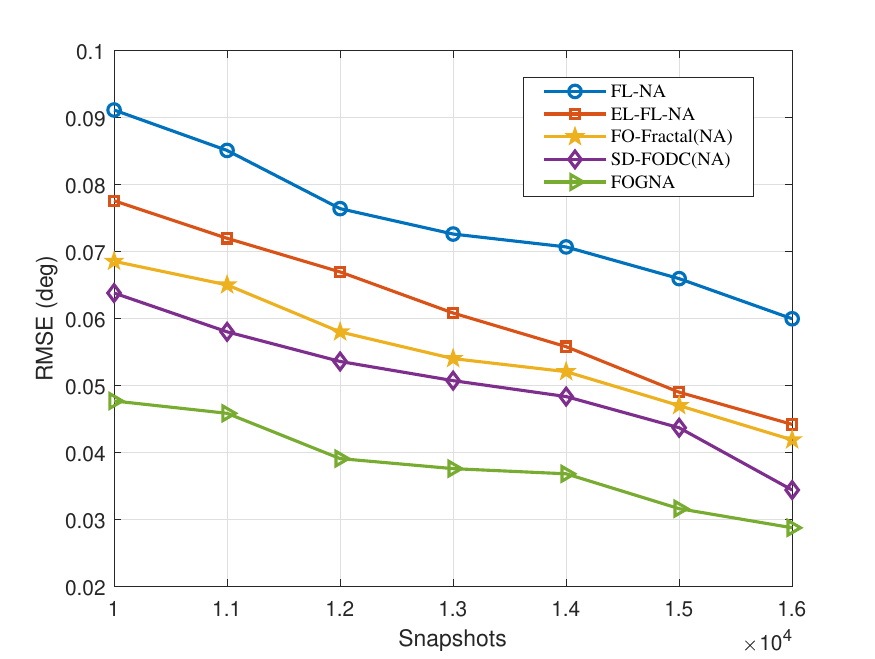}}
  \hspace{0in} 
  \subfigure[RMSE versus The number of sources]{
    \label{fig:subfig:threefunction}
    \includegraphics[scale=0.18]{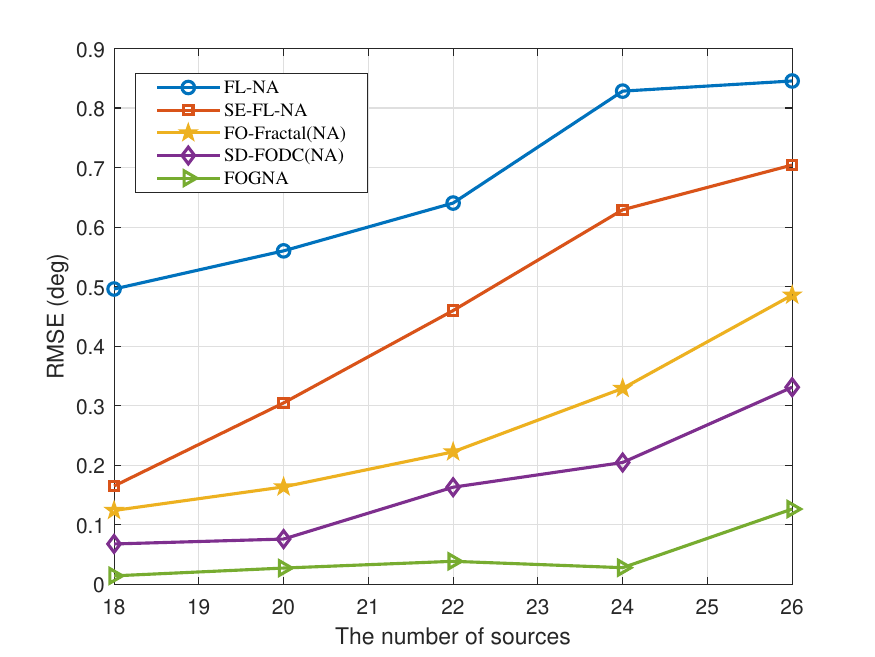}}
  \caption{DOA estimation performance with mutual coupling based on fourth-order cumulants}
\end{figure}

\section{Conclusion}
This paper exploits high-order statistics, namely fourth-order cumulants,
to devise the FOECA. Further a novel sparse linear array, namely FOGNA, is proposed based on FOECA,
which can enhance the DOF and improve the DOA estimation performance.
FOGNA consists of three subarrays with a given number of sensors.
Subarray 1 is capable to produce a hole-free sum co-array with more consecutive lags,
and both subarray 2 and 3 can produce a hole-free difference co-array with more consecutive lags.
Based on these criterions, CNA is selected as subarray 1.
Subarray 2 and 3 are a sparse linear array with a large inter-spacing between two physical sensors.
This arrangement yields closed-form expressions for the sensor positions of the proposed FOGNA by Algorithm 1.
Therefore, based on this design, the proposed FOGNA offers significantly larger DOF than those of other existing FODCAs.
The enhanced DOF increase the number of resolvable sources with the given number of physical sensors.
Meanwhile, compared to existing FODCAs, FOGNA exhibits lower mutual coupling and higher resolution.
In addition, the proposed FOGNA needs more snapshots than the sparse linear array based on second-order statistics,
as estimators based on high-order statistics entail a relatively higher variance.
Designing a suitable DOA estimation algorithm \cite{Kane2024}
to effectively exploit FOECA is a potential future work of this paper.

\section*{Acknowledgment}
This work was supported by the National Natural Science Foundation of China (Grant No. 62371400).\\

{\appendices
\section{Proof of the Corollary 1}
From (\ref{wang4}), we can get the DOF and $E_1$ of FOGNA. Due to $0\leq N_1 \leq N$, they are enlarged as follows,
\begin{equation}
\begin{aligned}
\text{DOFs} \leq &2 \{ 2(\lfloor \frac{2N+1}{4} \rfloor +1)+(2E_1+\lceil \frac{2N-1}{4} \rceil)(2E_1+1)\}+1,\\
E_1\leq&(N-1)\lceil\frac{N-1}{4}\rfloor+N-1,
\end{aligned}
\end{equation}
substituting $E_1$ into DOF, we can get
\begin{equation}
\begin{aligned}
\text{DOF} &\leq 2 \{ 2(\lfloor \frac{2N+1}{4} \rfloor +1)+(2((N-1)\lceil\frac{N-1}{4}\rfloor+N-1)\\
&+\lceil \frac{2N-1}{4} \rceil)(2((N-1)\lceil\frac{N-1}{4}\rfloor+N-1)+1)\}+1,\\
&\leq 2 \{ 2(\lfloor \frac{2N+1}{4} \rfloor +1)+(2(N\lceil\frac{N}{4}\rfloor+N)+\lceil \frac{2N}{4} \rceil)\\
&(2(N\lceil\frac{N}{4}\rfloor+N)+1)\}+1,\\
\end{aligned}
\end{equation}
when $N\equiv0\ (mod\ 4)$, the upper bound DOF are
\begin{equation}
\begin{aligned}
\mathcal{O}(\text{DOF}) &= \mathcal{O}(2\cdot 2\cdot \frac{N^2}{4}\cdot 2N\cdot \frac{N}{4})=\mathcal{O}(\frac{N^4}{2}).
\end{aligned}
\end{equation}

}

\end{document}